\title{On the Vulnerability of Text Sanitization}
\definecolor{light_red}{rgb}{0.96, 0.76, 0.76}
\definecolor{light_green}{rgb}{0.84, 0.94, 0.77}
\newcommand{\mypara}[1]{\smallskip\noindent\textbf{#1.} \xspace}
\newtheorem{theorem}{\bf Theorem}
\newtheorem{theorema}{\bf Theorem}
\newtheorem{definition}{\bf{Definition}}
\author[1]{Meng Tong}
\author[1\thanks{Corresponding author.}]{Kejiang Chen}
\author[1]{Xiaojian Yuan}
\author[2]{Jiayang Liu}
\author[1]{\\Weiming Zhang}
\author[1]{Nenghai Yu}
\author[3]{Jie Zhang}
\affil[1]{University of Science and Technology of China}
\affil[2]{Nanyang Technological University}
\affil[3]{CFAR and IHPC, A*STAR, Singapore}
\affil[ ]{\{\texttt{mtong,xjyuan\}@mail.ustc.edu.cn,\{\texttt{chenkj,zhangwm,ynh\}@ustc.edu.cn}}}
\affil[ ]{\texttt{jiayang.liu@ntu.edu.sg}, \; \texttt{zhang\_jie@cfar.a-star.edu.sg}}
\begin{document}
\maketitle
\begin{abstract}
Text sanitization, which employs differential privacy to replace sensitive tokens with new ones, represents a significant technique for privacy protection. Typically, its performance in preserving privacy is evaluated by measuring the attack success rate (ASR) of reconstruction attacks, where attackers attempt to recover the original tokens from the sanitized ones. However, current reconstruction attacks on text sanitization are developed empirically, making it challenging to accurately assess the effectiveness of sanitization. In this paper, we aim to provide a more accurate evaluation of sanitization effectiveness. Inspired by the works of Palamidessi\;et\;al.~\cite{alvim2015information, cherubin2019f}, we implement theoretically optimal reconstruction attacks targeting text sanitization. We derive their bounds on ASR as benchmarks for evaluating sanitization performance. For real-world applications, we propose two practical reconstruction attacks based on these theoretical findings. Our experimental results underscore the necessity of reassessing these overlooked risks. Notably, one of our attacks achieves a $46.4\%$ improvement in ASR over the state-of-the-art baseline, with a privacy budget of $\epsilon=4.0$ on the SST-2 dataset. Our code is available at: \href{https://github.com/mengtong0110/On-the-Vulnerability-of-Text-Sanitization}{https://github.com/mengtong0110/On-the-Vulnerability-of-Text-Sanitization}.
\end{abstract}

\section{Introduction}\label{intr}

With the advancement of natural language processing (NLP)~\cite{chang2023survey}, concerns related to its privacy implications~\cite{kim2024propile} have significantly intensified. For instance, Samsung employees previously leaked the company’s confidential meeting records by uploading documents to the cloud services of large language models (LLMs)~\cite{Sam}. More recently, reports~\cite{report,report2} have disclosed a concerning flaw in LLM-based chatbots~\cite{gpt_store}, which could expose user data and put private information at risk. 
\begin{figure}[t]
\centering
\includegraphics[width=1\columnwidth]{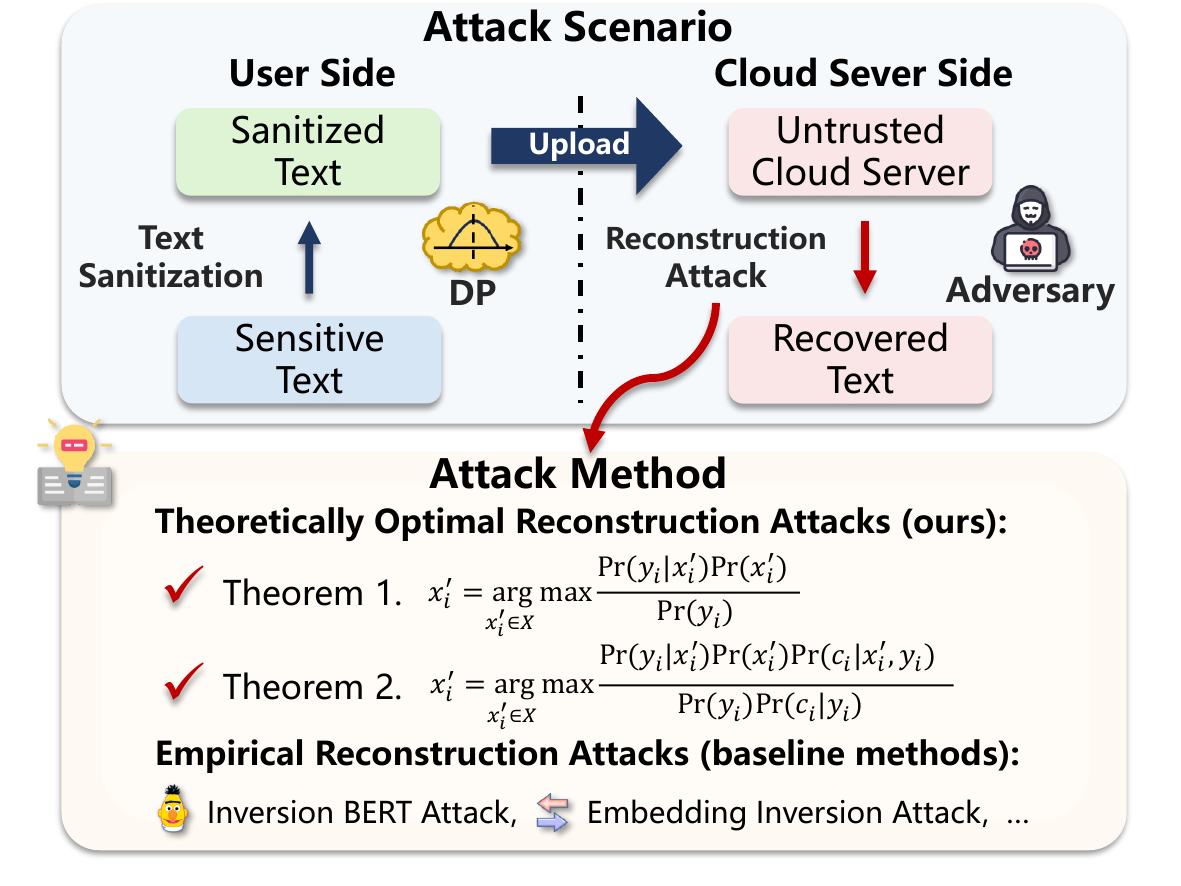}
\vspace{-0.4cm}
\caption{The illustration of reconstruction attacks.}
\label{fig:problem}
\end{figure}

To\,address\,these\,privacy\,concerns, text sanitization \cite{chen2023customized,tong2023InferDPT} is now regarded as a crucial solution for privacy preservation in text and has been widely adopted by companies, such as Amazon \cite{amazon} and Microsoft \cite{Priva}. Specifically, before users upload their documents to an untrusted cloud server for model training or inference, they can employ differential privacy (DP) \cite{dwork2006differential} to replace tokens corresponding to sensitive entities in the documents with newly sampled tokens. This method minimizes alterations in the uploaded data compared to other methods, such as synthetic data generation \cite{hong2024dpopt,utpala-etal-2023-locally}.

While DP sanitizes sensitive tokens, recent works~\cite{zhou2023textobfuscator,du2023dp} have demonstrated that attackers can still recover them from sanitized text. This finding underscores the significance of measuring the effectiveness of text sanitization. One of the principal methods for evaluating its effectiveness in privacy protection is through reconstruction attacks~\cite{balle2022reconstructing,yue-etal-2021-differential}, which aim to recover sensitive tokens from sanitized text, as illustrated in \Cref{fig:problem}.

\mypara{Existing Attacks} In this context, several reconstruction attacks have been developed to evaluate the vulnerability of text sanitization: The \textit{Embedding Inversion Attack} \cite{qu2021natural} reconstructs tokens by selecting the token with the closest embedding distance to the sanitized token \cite{song2020information}. The \textit{Inversion BERT Attack} \cite{kugler2021invbert} trains an inversion model to map sanitized tokens back to their original forms. The \textit{Mask Token Inference Attack} \cite{yue-etal-2021-differential} replaces each token in the sanitized text with a “[MASK]” token to predict the raw one. However, these attacks are empirically developed, which makes it difficult to accurately measure the effectiveness of sanitization (as discussed in \Cref{rw}).

\mypara{Our Proposal} {In this paper, we aim to address the evaluation gap in privacy protection for DP-based text sanitization.} Inspired by the works of Palamidessi et al.~\cite{alvim2015information, cherubin2019f}, who estimate privacy leakage using Bayesian inference, we introduce both the \textit{Context-free Optimal Reconstruction Attack} and the \textit{Contextual Optimal Reconstruction Attack} for assessing text sanitization. We also derive theoretical bounds on the attack success rate (ASR), which serve as benchmarks for evaluating its performance in protecting privacy. To support real-world implementation, we propose two practical attacks based on these theoretical findings. Experimental results demonstrate that our reconstruction attacks achieve higher ASRs than baseline methods.

\mypara{Our Contributions} We summarize our main contributions as follows:
\vspace{-0.2cm}
\begin{enumerate}
\item We implement theoretically context-free and contextual optimal reconstruction attacks against text sanitization, deriving their bounds on ASR to serve as benchmarks for evaluating the effectiveness of the sanitization.
\vspace{-0.33cm}
\item Based on these theoretical findings, we~propose two practical reconstruction attacks: \textit{Context-free~Bayesian~Attack} and \textit{Contextual Bayesian Attack}. Notably, \textit{Contextual Bayesian Attack} is the first to transform the reconstruction attack to a  classification task.
\vspace{-0.33cm}
\item We reassessed text sanitization methods across four datasets using seven reconstruction attacks. Our experimental results underscore the necessity of paying closer attention to these overlooked risks. Specifically, \textit{Contextual Bayesian Attack} achieves a 46.4\% improvement in ASR over the state-of-the-art baseline at $\epsilon=4.0$ on the SST-2 dataset~\cite{socher-etal-2013-recursive}.
\end{enumerate}

\vspace{-0.4cm}
\section{Preliminaries and Related Work}\label{rw}
\vspace{-0.1cm}
\mypara{Text Sanitization}\!\!Text sanitization~\cite{chen2023customized,yue-etal-2021-differential} leverages differential privacy (DP) \cite{dwork2006differential} to replace sensitive tokens with newly sampled tokens. Specifically, let \( X \) represent the set of sensitive tokens and \( Y \) represent the set of sanitized tokens. An original sentence consists of sensitive tokens \( x_1, x_2, \ldots, x_n \), where \(x_i \in X\). Its sanitized sentence is \( y_1, y_2, \ldots, y_n \), where the token \(y_i \in Y\) is independently sampled by DP with the probability \(\Pr\left(y_i|x_i\right)\). Additionally, we define the sanitized context \( \mathbf{c}_{i} = y_{1}, y_{2}, \ldots, y_{i-1}, y_{i+1}, \ldots, y_n \) for the sanitized token $y_{i}$ ({or original token $x_{i}$}), where \( \mathbf{c}_{i} \in C\) and \(C\) denotes the set of sanitized contexts.

In addition to text sanitization, recent studies~\cite{staab2024large, StaabVBV24} have demonstrated that LLMs can also protect private information.

\mypara{Definition of Sensitive Tokens} 
In previous research~\cite{tong2023InferDPT,papadopoulou2022neural,plant2021cape}, they have made varying choices on the definition of sensitive tokens. In our study, we set the sensitive tokens in accordance with the corresponding defense method of text sanitization~\cite{chen2023customized,yue-etal-2021-differential}.



\mypara{Reconstruction Attacks}
The \textit{Embedding Inversion Attack}~\cite{qu2021natural} identifies the token closest to a sanitized token by comparing their embedding distances~\cite{song2020information} as the reconstructed token. The \textit{Inversion BERT Attack}~\cite{kugler2021invbert} uses a trained model to convert sanitized representations back to their original tokens on a one-to-one basis. However, neither method has performed a comprehensive evaluation of sanitization effectiveness, which should consider the context of the sanitized tokens. The \textit{Mask Token Inference Attack}~\cite{yue-etal-2021-differential} replaces each token in sanitized text with a ``[MASK]'' token, which is then predicted to reconstruct the raw token. This attack overlooks the impact of probability distributions of raw tokens and the sampling probabilities in DP on the attack result, leading to an inaccurate evaluation of sanitization performance.

To address these issues, we introduce optimal reconstruction attacks and propose corresponding practical approaches for evaluating the effectiveness of DP-based text sanitization.

\section{Theoretically Optimal Reconstruction Attacks}
In this section, we first present the threat model and then develop \textit{Context-free Optimal Reconstruction Attack}, deriving its ASR bound to enable quick evaluation of text sanitization. For a comprehensive evaluation, we propose \textit{Contextual Optimal Reconstruction Attack} with its ASR bound.

\subsection{Threat Model for Theoretical Attacks} 
\mypara{Adversary's Goal} The goal of an adversary is to obtain information on sensitive tokens by reconstructing each original token from its sanitized one.   

\mypara{Adversary's Knowledge} We assume an adversary who knows the differentially private algorithm with its sampling probabilities \(\Pr\left(Y|X\right)\), which is utilized in text sanitization. Additionally, we propose an assumption that the adversary possesses the following theoretical probabilities: \(\Pr\left({X}\right)\) (the probability of a token appearing in an original sentence) and \(\Pr\left(C|X, Y\right)\) (the probability that a sanitized context corresponds to an original token and a sanitized token in the sanitization process). 

\subsection{Warm-up: Context-free Optimal Reconstruction Attack}
Building on the Bayesian estimation of DP~\cite{alvim2015information, cherubin2019f}, we introduce the context-free optimal strategy for reconstructing the original tokens from sanitized tokens. We further derive its tight ASR bound, which achieves a quick evaluation of sanitization effectiveness. 

\begin{theorem}[\textbf{Context-free Optimal Reconstruction Attack}]\label{theorem:1}
Given a sanitized token $y\in Y$, the optimal strategy to reconstruct the original token of $y$ is to select the token ${x}' \in X$ according to the following rule:
\begin{align}
{x}' =&\, \underset{{x}' \in X}{\mathrm{arg\,max\,}} \Pr\left({x}'|y\right)\label{eq1}\\ 
=& \,\underset{{x}' \in X}{\mathrm{arg\,max\,}} \frac{\Pr\left(y|{x}'\right)  \Pr\left({x}'\right)}{\Pr\left(y\right)}.
\end{align}
\end{theorem}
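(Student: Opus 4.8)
The plan is to prove that the Bayes-optimal reconstruction rule maximizes the per-token attack success rate, and that this rule is exactly the MAP estimate $\arg\max_{x'\in X}\Pr(x'\mid y)$, which by Bayes' rule equals $\arg\max_{x'\in X}\Pr(y\mid x')\Pr(x')/\Pr(y)$. First I would formalize the adversary as an arbitrary (possibly randomized) function $g\colon Y\to X$, or more generally a conditional distribution $\Pr_g(x'\mid y)$, and write its success probability conditioned on observing $y$ as $\Pr(\text{success}\mid y)=\sum_{x'\in X}\Pr_g(x'\mid y)\Pr(x'\mid y)$, where $\Pr(x'\mid y)$ is the true posterior that the original token was $x'$. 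The overall ASR is then $\sum_{y\in Y}\Pr(y)\,\Pr(\text{success}\mid y)$, so it suffices to maximize the inner sum for each $y$ separately, since the choice of $g$ at different values of $y$ is unconstrained.

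Next I would observe that for fixed $y$, the quantity $\sum_{x'\in X}\Pr_g(x'\mid y)\Pr(x'\mid y)$ is a convex combination of the numbers $\{\Pr(x'\mid y)\}_{x'\in X}$, hence is at most $\max_{x'\in X}\Pr(x'\mid y)$, with equality attained by putting all mass on any maximizer $x^\star = \arg\max_{x'\in X}\Pr(x'\mid y)$ — i.e.\ the deterministic rule in \Cref{eq1}. This establishes optimality of the rule and simultaneously identifies the optimal ASR as $\sum_{y\in Y}\Pr(y)\max_{x'\in X}\Pr(x'\mid y)$, which is the natural benchmark referenced in the surrounding text. Finally, applying Bayes' theorem $\Pr(x'\mid y)=\Pr(y\mid x')\Pr(x')/\Pr(y)$ — valid whenever $\Pr(y)>0$, and values of $y$ with $\Pr(y)=0$ may be ignored — gives the second displayed equality; note the denominator $\Pr(y)$ does not depend on $x'$ and so could even be dropped from the $\arg\max$, but keeping it matches the statement.

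I would also spell out the one modeling assumption being used implicitly: the adversary's reconstruction of token $y_i$ depends only on $y_i$ (the "context-free" restriction), so that correctness events factor per position and the single-token analysis above is exactly what is being optimized; the dependence structure among the $x_i$'s and the $y_j$'s for $j\neq i$ is irrelevant here precisely because the strategy class is restricted to functions of $y_i$ alone. The main obstacle is not technical difficulty — the argument is a short convexity/averaging step — but rather stating the optimality claim over the right class of adversaries (allowing randomization, so that the "optimal" qualifier is meaningful) and being careful about the per-$y$ decoupling; once those are set up, the proof is essentially the classical fact that the MAP estimator minimizes Bayes risk under $0$–$1$ loss, specialized to this setting.
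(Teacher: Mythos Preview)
Your proposal is correct and follows essentially the same approach as the paper's proof: formalize a (possibly randomized) attacker via weights $p_i$ summing to one, observe that the conditional success probability $\sum_i p_i \Pr(x_i\mid y)$ is a convex combination bounded by $\max_{x'}\Pr(x'\mid y)$, and conclude that the MAP rule attains this bound. Your write-up is in fact more careful than the paper's (explicit per-$y$ decoupling, the $\Pr(y)>0$ caveat, and the context-free modeling assumption are all spelled out), but the core argument is identical.
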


The proof of \Cref{theorem:1} is in~\Cref{prf}. It is shown that an adversary should select the most likely original token ${x}'$ corresponding to the sanitized token $y$ as the reconstructed token. According to Bayes' formula~\cite{joyce2003bayes}, \Cref{eq1} can be decomposed into several components. $\Pr\left(y\right)$ is constant for all ${x}'\in X$ and does not affect the outcome of the attack. For $\Pr\left(y|{x}'\right)$ and $\Pr\left({x}'\right)$, it is observed that: 1) A token ${x}'$ with a higher sampling probability $\Pr\left(y|{x}'\right)$ is more likely to be the original token of $y$; 2) A token ${x}'$ with a higher $\Pr\left({x}'\right)$ appears more frequently and is more likely to be the raw token of $y$.

Based on the context-free optimal reconstruction attack, we derive its tight bound on the ASR as a quick evaluation for the effectiveness of DP-based text sanitization:

\begin{definition}[\textbf{Context-free ASR Bound}]
Let $x_1^j$\,, $x_2^j$\,, $\ldots$\,, $x_{n_j}^j$  and $y_1^j$\,, $y_2^j$\,, \ldots\,, $y_{n_j}^j$ represent the \(j\)-th original sentence and its sanitized sentence by DP. Given any $m$ sanitized sentences, the Context-free Attack Success Rate Bound (Context-free ASR Bound) is defined as follows:
\begin{equation}
{
\hspace{-6pt}\hspace{-3pt}\text{Context-free ASR Bound}:=\frac{\sum_{j=1}^{m}\sum_{i=1}^{n_j} \delta_{x_{i}^j,{x_{i}^j}'}}{\sum_{i=1}^{m}n_j},}
\end{equation}
\begin{equation}
\hspace{-6pt}{x_{i}^j}'\hspace{-1pt}=\hspace{-1pt}\underset{{x_{i}^j}' \in X}{\mathrm{arg\,max\,}}\hspace{-2pt}\Pr\,({x_{i}^j}'|y_{i}^j),\delta_{x_{i}^j, {x_{i}^j}'} = \begin{cases} 
1\;,x_{i}^j={x_{i}^j}' \\
0\;,x_{i}^j\neq {x_{i}^j}'\end{cases}\hspace{-13pt}.
\end{equation}

 \end{definition}
 
The \textit{Context-free ASR Bound} serves as a tight ASR bound for the \textit{Context-free Optimal Reconstruction Attack}. This is because the attack of \textit{Context-free ASR Bound} strictly follows \Cref{theorem:1} to reconstruct the sanitized tokens. This bound primarily utilizes simple frequency statistics, making it a convenient tool for swiftly evaluating DP-based text sanitization methods. To represent $\Pr({x_{i}^j}')$, which is informed to the adversary in the threat model, it computes the theoretical probability of the original token ${x_{i}^j}'$ appearing in the original sentences. The term $\Pr(y_{i}^j|{x_{i}^j}')$ denotes the probability that DP samples sanitized token $y_{i}^j$ with original token ${x_{i}^j}'$. Although \textit{Context-free ASR Bound}  swiftly assesses the sanitization effectiveness, it does not consider sanitized context, which may lead to a less comprehensive evaluation. Therefore, we develop \textit{Contextual Optimal Reconstruction Attack}.

\subsection{Advanced: Contextual Optimal Reconstruction Attack}

\begin{theorem}[\textbf{Contextual Optimal Reconstruction Attack}]\label{theorem:2}
Given a sanitized token $y$ and its sanitized context $\mathbf{c}$, the optimal strategy to reconstruct the original token of $y$ is to select the token ${x}'\in X$ according to the following rule:
{\begin{align}
&{x}' = \underset{{x}' \in X}{\mathrm{arg\,max\,}}  \Pr\left({x}'|y,\mathbf{c}\right)\\&= \underset{{x}' \in X}{\mathrm{arg\,max\,}}  \frac{\Pr\left(y|{x}'\right) \Pr\left({x}'\right)\Pr\left(\mathbf{c}|{x}', y\right)}{\Pr\left(y\right)\Pr\left(\mathbf{c}|y\right)} .
\end{align}}
\end{theorem}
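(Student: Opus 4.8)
The plan is to prove \Cref{theorem:2} by a Bayesian-risk argument that mirrors the proof of \Cref{theorem:1}, the only difference being that the adversary's observation is now the pair $(y,\mathbf{c})$ rather than $y$ alone. First I would set up the decision-theoretic framework: a reconstruction strategy is a (possibly randomized) map that, on observing $(y,\mathbf{c})$, outputs a guess ${x}' \in X$, and the attack succeeds exactly when ${x}' = x$, the true original token. The adversary's expected success rate is $\sum_{y,\mathbf{c}} \Pr(y,\mathbf{c}) \Pr(\text{guess correct}\mid y,\mathbf{c})$, so it suffices to maximize the per-observation success probability $\Pr(\text{guess correct}\mid y,\mathbf{c})$ independently for each realized $(y,\mathbf{c})$, since the strategy's choice on one observation does not constrain its choice on another.

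Next I would carry out the pointwise maximization: conditioned on observing $(y,\mathbf{c})$, if the adversary deterministically outputs a fixed token ${x}'$, the success probability is exactly $\Pr(x = {x}' \mid y, \mathbf{c}) = \Pr({x}' \mid y, \mathbf{c})$; allowing randomization only yields a convex combination of such values and hence cannot exceed the maximum, so the optimal choice is ${x}' = \arg\max_{{x}' \in X} \Pr({x}' \mid y, \mathbf{c})$, which is the first displayed equality. The second equality is then a routine application of Bayes' formula: expand $\Pr({x}'\mid y,\mathbf{c}) = \Pr({x}', y, \mathbf{c}) / \Pr(y,\mathbf{c})$, write the numerator as $\Pr({x}')\Pr(y\mid {x}')\Pr(\mathbf{c}\mid {x}', y)$ using the chain rule together with the assumption (from the threat model) that $y$ is sampled from $x$ via the DP mechanism, and write the denominator as $\Pr(y)\Pr(\mathbf{c}\mid y)$. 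Since $\Pr(y)\Pr(\mathbf{c}\mid y) = \Pr(y,\mathbf{c})$ does not depend on ${x}'$, it may be kept or dropped without changing the $\arg\max$, giving the stated formula.

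I would be careful about two modeling points that make the argument go through cleanly. One is that the quantities $\Pr(y\mid {x}')$, $\Pr({x}')$, and $\Pr(\mathbf{c}\mid {x}', y)$ are precisely the pieces the threat model grants the adversary, so the optimal strategy is in fact computable under the stated knowledge assumptions; this should be remarked on so the theorem is not vacuous. The other is the independence structure of the sanitization: each $y_i$ is sampled from $x_i$ alone, which justifies collapsing $\Pr(y\mid {x}', \mathbf{c})$-type terms appropriately and is what lets the chain-rule expansion take the clean three-factor form in the numerator.

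The main obstacle is not the optimization — that is the same one-line Bayes-optimal-classifier argument as in \Cref{theorem:1} — but rather making the probabilistic model fully rigorous: one must specify the joint distribution over $(x, y, \mathbf{c})$ (original token, its sanitized image, and the surrounding sanitized context) carefully enough that the factorization $\Pr(x,y,\mathbf{c}) = \Pr(x)\Pr(y\mid x)\Pr(\mathbf{c}\mid x,y)$ is literally an instance of the chain rule and the conditional independences being invoked actually hold. Once the joint model is pinned down, the rest is immediate, so I would spend most of the write-up stating the model precisely and only a few lines on the maximization and the Bayes expansion.
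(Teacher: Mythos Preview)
Your proposal is correct and follows essentially the same approach as the paper: both argue that any (possibly randomized) reconstruction rule achieves conditional success probability equal to a convex combination $\sum_i \Pr(x_i\mid y,\mathbf{c})\,p_i$, which is bounded above by $\max_{x'}\Pr(x'\mid y,\mathbf{c})$, so the deterministic MAP rule is optimal. Your write-up is in fact more thorough than the paper's, since the paper's proof only establishes the first displayed equality and does not spell out the Bayes/chain-rule factorization for the second, whereas you do.
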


The proof of \Cref{theorem:2} is provided in \Cref{prf}. It demonstrates that an adversary should select the most probable original token ${x}'$, corresponding to the sanitized token $y$ and its sanitized context $\mathbf{c}$ (previously defined in \Cref{rw}), as the reconstructed token. \Cref{theorem:2} reduces the amount of uncertainty in reconstructing the original tokens compared to \Cref{theorem:1}. This holds because \(  H\!\left(X|\,Y,C\right) \leq H\!\left(X|\,Y\right) \), where \(  H\!\left(X|\,Y\right) \) is the entropy of \( X \) conditioned on \( Y \), and \(  H\!\left(X|\,Y,C\right) \) is the entropy of \( X \) conditioned on both \( Y \) and \( C \). Additionally, \Cref{theorem:2} includes extra calculations for $\Pr(\mathbf{c}|{x}', y)$ and $\Pr\left(\mathbf{c}|y\right)$ relative to \Cref{theorem:1}. $\Pr\left(\mathbf{c}|y\right)$ remains constant for all ${x}'\in X$ and does not influence the outcome of the attack. $\Pr\left(\mathbf{c}|{x}', y\right)$ represents the probability that $\mathbf{c}$ is the sanitized context of both ${x}'$ and $y$.

\mypara{Challenge of Computation}However, since $\mathbf{c}$ is a sequence that consists of sanitized tokens, accurately calculating the theoretical value of $\Pr\left(\mathbf{c}|{x}', y\right)$ poses a challenge. Due to this intractability, we use its approximation to derive the ASR bound of attack strategy in \Cref{theorem:2}.

\begin{definition}[\textbf{Contextual \textit{K}-ASR Bound}]
Let $x_1^j$\,, $x_2^j$\,, $\ldots$\,, $x_{n_j}^j$  and $y_1^j$\,, $y_2^j$\,, \ldots\,, $y_{n_j}^j$ represent the \(j\)-th original sentence and its sanitized sentence by DP. Let the sanitized context \( \mathbf{c}_{i}^j\) for the \(i\)-th sanitized token $y_{i}^j$ (whose original token is $x_{i}^j$) be $\mathbf{c}_{i}^j$ $=$ $y_1^j$, $y_2^j$, $\ldots$\,, $y_{i-1}^j$, $y_{i+1}^j$, $\ldots$\,, $y_{n_j}^j$. Given any $m$ sanitized sentences, the Contextual K Attack Success Rate Bound (Contextual \textit{K}-ASR Bound) is defined as follows:
\begin{equation}
{
\text{Contextual \textit{K}-ASR Bound} := \frac{\sum_{j=1}^{m}\sum_{i=1}^{n_j} \delta_{x_{i}^j,{x_{i}^j}'}}{\sum_{j=1}^{m}n_j},}    
\end{equation}
\begin{equation}
{x_{i}^j}' = \underset{{x_{i}^j}' \in {X_{i}^j}'}{\mathrm{arg\,max\,}} \Pr({x_{i}^j}'|y_{i}^j,\mathbf{c}_{i}^j),    
\end{equation}
\begin{equation}
\hspace{-3pt}{X_{i}^j}'\hspace{-3pt}=\!\underset{{x_{i}^j}' \in X}{\mathrm{arg\,top\,}K} \Pr\,({x_{i}^j}'|y_{i}^j),\,\delta_{x_{i}^j, {x_{i}^j}'} = \begin{cases} 
1\;,x_{i}^j={x_{i}^j}' \\
0\;,x_{i}^j\neq {x_{i}^j}'\end{cases}\hspace{-13pt}.
\end{equation}
\end{definition}
\begin{figure*}[t]
\hspace*{9pt}
    \setlength{\tabcolsep}{-6pt} 
    \centering
    \begin{tabular}{ccc}
    \subfloat[\textit{SANTEXT+}\_SST-2\hspace*{3.4pt}]{
        \includegraphics[width=0.35\textwidth]{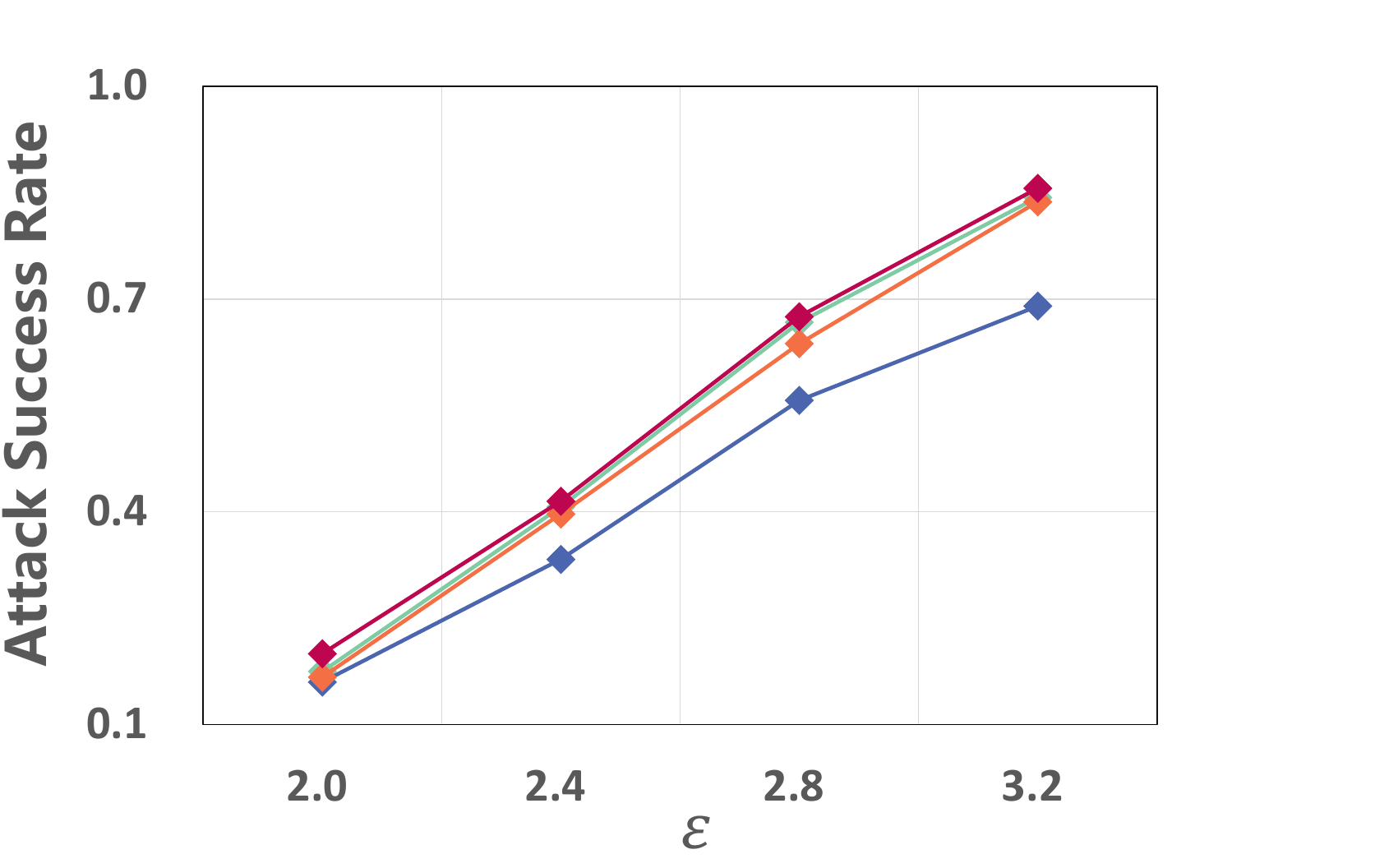}
    } &
    \subfloat[\textit{SANTEXT+}\_QNLI\hspace*{6.5pt}]{
        \includegraphics[width=0.35\textwidth]{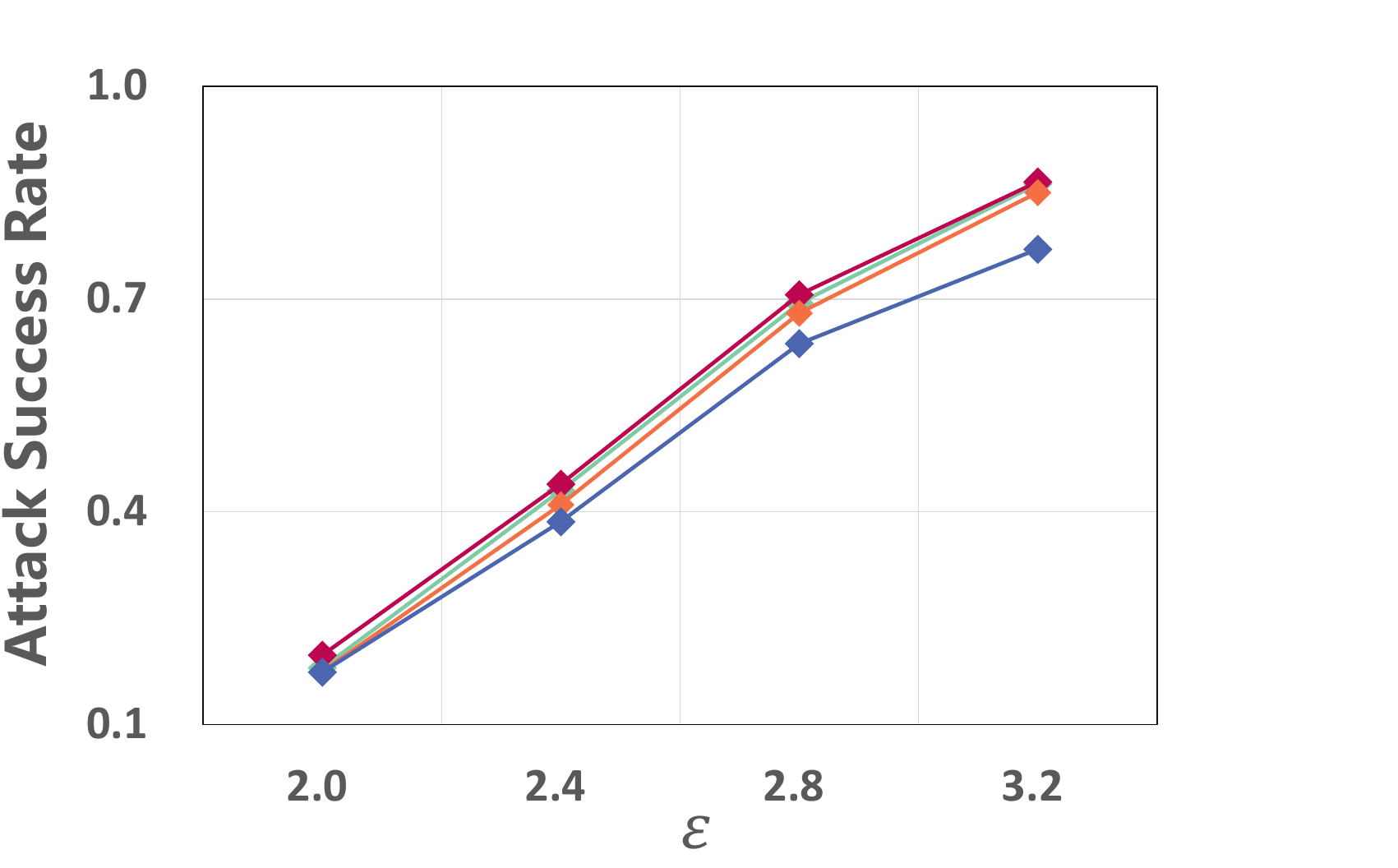}
    } &
    \subfloat[\textit{SANTEXT+}\_AGNEWS\hspace*{8.5pt}]{
        \includegraphics[width=0.35\textwidth]{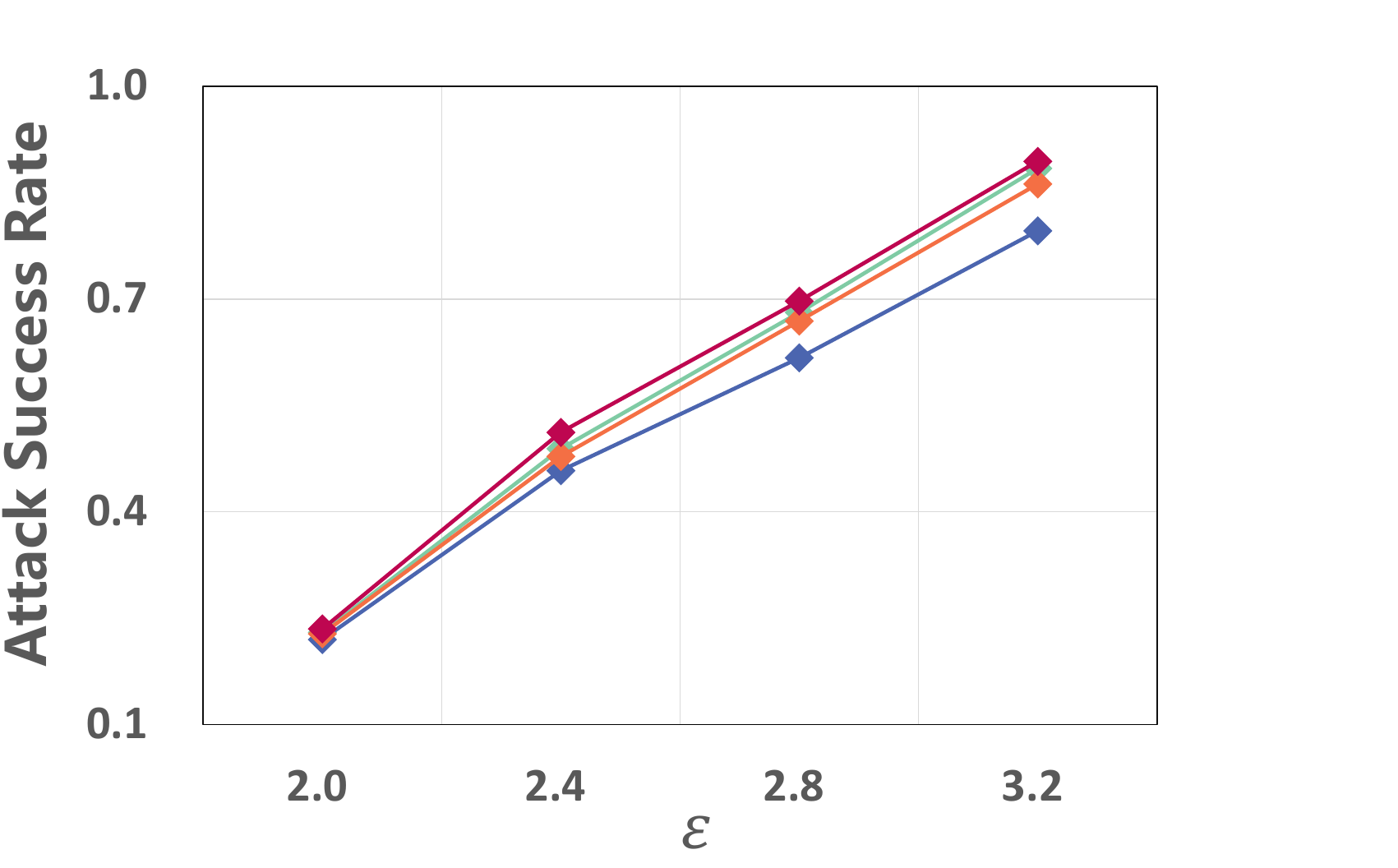}
    }\\
        \multicolumn{3}{c}{\includegraphics[width=1\textwidth]{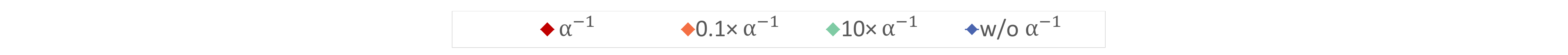}}
    \end{tabular}
    \caption{ASR of \textit{Context-free Bayesian Attack} using various constants instead of $\alpha^{-1}$  on SST-2 dataset.}
    \label{DS}
\end{figure*}

The \textit{Contextual \textit{K}-ASR Bound} provides an approximation of the ASR bound derived from~\Cref{theorem:2}.  It calculates only the top \(K\) values of \(\Pr({x_{i}^j}' | y_{i}^j,\mathbf{c}_{i}^j)\), sorted by \(\Pr({x_{i}^j}' | y_{i}^j)\). Because traversing the entire ${x_{i}^j}' \in X$ for the attack outcome requires a significant amount of time, as detailed in~\Cref{tmk}. We find that it is not necessary to do so. The experiments in~\Cref{ti2} indicate that the top \(K\) tokens sorted by \(\Pr({x_{i}^j}' | y_{i}^j)\) include the majority of original tokens. Increasing $K$ does not necessarily improve ASR. This method balances the trade-off between time cost and ASR.

For probability computations in the \textit{Contextual \textit{K}-ASR Bound}, it easily calculates $\Pr({x_{i}^j}')$ based on the theoretical probability of the original token ${x_{i}^j}'$ appearing in the original sentences. However, given that $\mathbf{c}_i^j$ is a sequence of sanitized tokens, it is challenging to calculate probability $\Pr(\mathbf{c}_{i}^j| {x_{i}^j}', y_{i}^j)$. Although the chain rule can break it down into multiple components~\cite{bain1992introduction}, rebuilding the conditional probabilities for each $\mathbf{c}_{i}^j$ is time-consuming.

To approximate the probability $\Pr(\mathbf{c}_{i}^j| {x_{i}^j}', y_{i}^j)$, we convert it into a binary classification task. We represent it by the prediction probability of a specific label from a BERT model \cite{devlin2018bert} that has been fine-tuned on both the sanitized and original sentences. We detail the implementation of this binary classification in the following~\cref{sec:pattack}. 

Based on the preceding discussion, we have outlined a panoramic view of the theoretically optimal reconstruction attacks and their corresponding ASR bounds. It is important to note that these attacks are developed based on theoretical probabilities. However, practical reconstruction attacks often fall short of this level of precision, typically approximating probability distributions using a shadow dataset. To address this, we introduce the \textit{Context-free Bayesian Attack} and the \textit{Contextual Bayesian Attack} as practical implementations of theoretically optimal reconstruction attacks.

\section{Practical Implementations of Theorems}\label{sec:pattack}

\subsection{Threat Model for Practical Attacks}
\mypara{Adversary’s Goal} The goal of an adversary is to obtain sensitive information by reconstructing each original token from its sanitized token and corresponding sanitized context. 

\mypara{Adversary’s Knowledge} We assume an adversary that accesses the information: 1) \textit{Differentially Private Algorithms.} We assume that an adversary is informed about the differentially private algorithms with its sampling probabilities \(\Pr\left(Y|X\right)\) for text sanitization. Differentially private algorithms are typically treated as public access~\cite{kairouz2014extremal,cormode2018privacy,wang2019collecting,murakami2019utility}. 2) \textit{A Shadow Dataset.} Additionally, we assume that an adversary has access to a shadow dataset, consistent with related privacy attacks~\cite{carlini2022membership,ye2022enhanced}. The shadow dataset is related to but disjointed from the original sentences corresponding to the sanitized tokens under attack.

\subsection{Context-free Bayesian Attack (based on Theorem 1)}
\mypara{Attack Strategy in \textit{Context-free Bayesian Attack}} \textit{Given a sanitized token $y\in Y$, the strategy for Context-free~Bayesian~Attack to reconstruct the original token of $y$ is to select the token ${x}' \in X$ according to the following rule:
\begin{equation}
{x}' = \underset{{x}' \in X}{\mathrm{arg\,max}\,} \frac{\Pr\left(y|{x}'\right) (\Pr\left({x}'\right)+\alpha^{-1})}{\Pr\left(y\right)},
\end{equation}
where $\Pr\left({x}'\right)$ is the statistical probability in the shadow dataset, and $\alpha$ is the number of tokens in the shadow dataset.
}

Unfortunately, the practical implementation of \Cref{theorem:1} encounters difficulty when the approximate probabilities $\Pr\left({x}'\right)$ of all potential original tokens corresponding to $y$ are zero in the shadow dataset. Under these conditions, it is challenging to select an ${x}'$ that uniquely maximizes $\Pr\left({x}'|y\right)$, as these probabilities $\Pr\left({x}'|y\right)$ are all zero. To address this, a small constant, $\alpha^{-1}$, is introduced in the attack. This modification enables \textit{Context-free~Bayesian~Attack} to select ${x}' = \arg\max \Pr\left(y|{x}'\right)$ as the reconstructed token in such cases. Moreover, $\alpha^{-1}$ has little effect on the outcome of the reconstruction attack when the probability $\Pr\left({x}'\right)$ is not zero. 

\Cref{DS} presents the ablation studies on parameter $\alpha^{-1}$ with the \textit{SANTEXT+}~\cite{yue-etal-2021-differential} defense, which demonstrated the effectiveness of $\alpha^{-1}$ in practical reconstruction attacks.
More experiments discussed in \Cref{dsa} further explain why we use $\alpha^{-1}$ in \textit{Context-free Bayesian Attack}. 

\begin{figure*}[t]
    \setlength{\tabcolsep}{-7pt} 
    \centering
    \begin{tabular}{ccc}
    \subfloat[\textit{CUSTEXT+}\_SST-2\hspace*{-7.5pt}]{
        \includegraphics[width=0.35\textwidth]{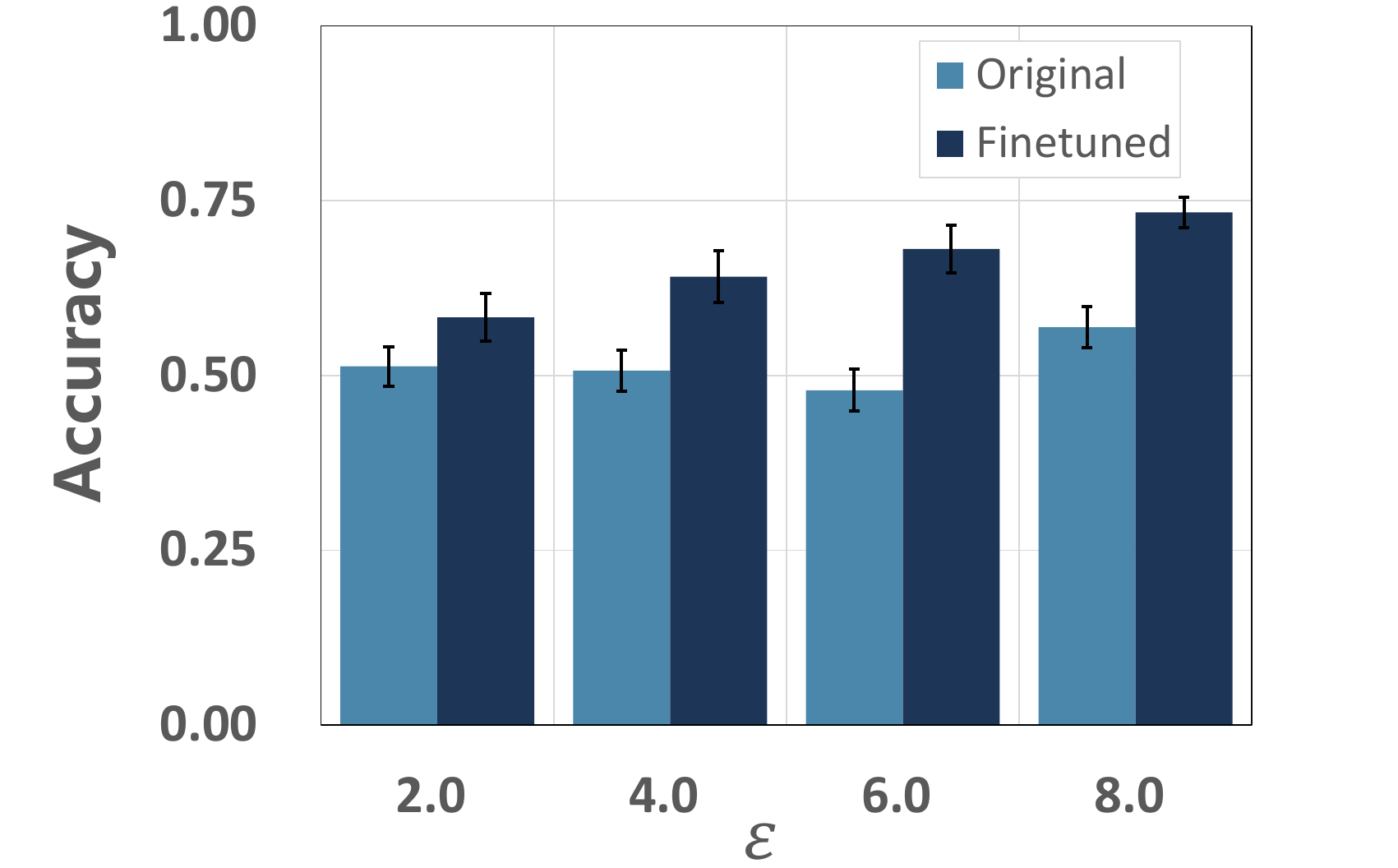}
    } &
    \hspace*{0pt}
    \subfloat[\textit{CUSTEXT+}\_QNLI\hspace*{-8.5pt}]{
        \includegraphics[width=0.35\textwidth]{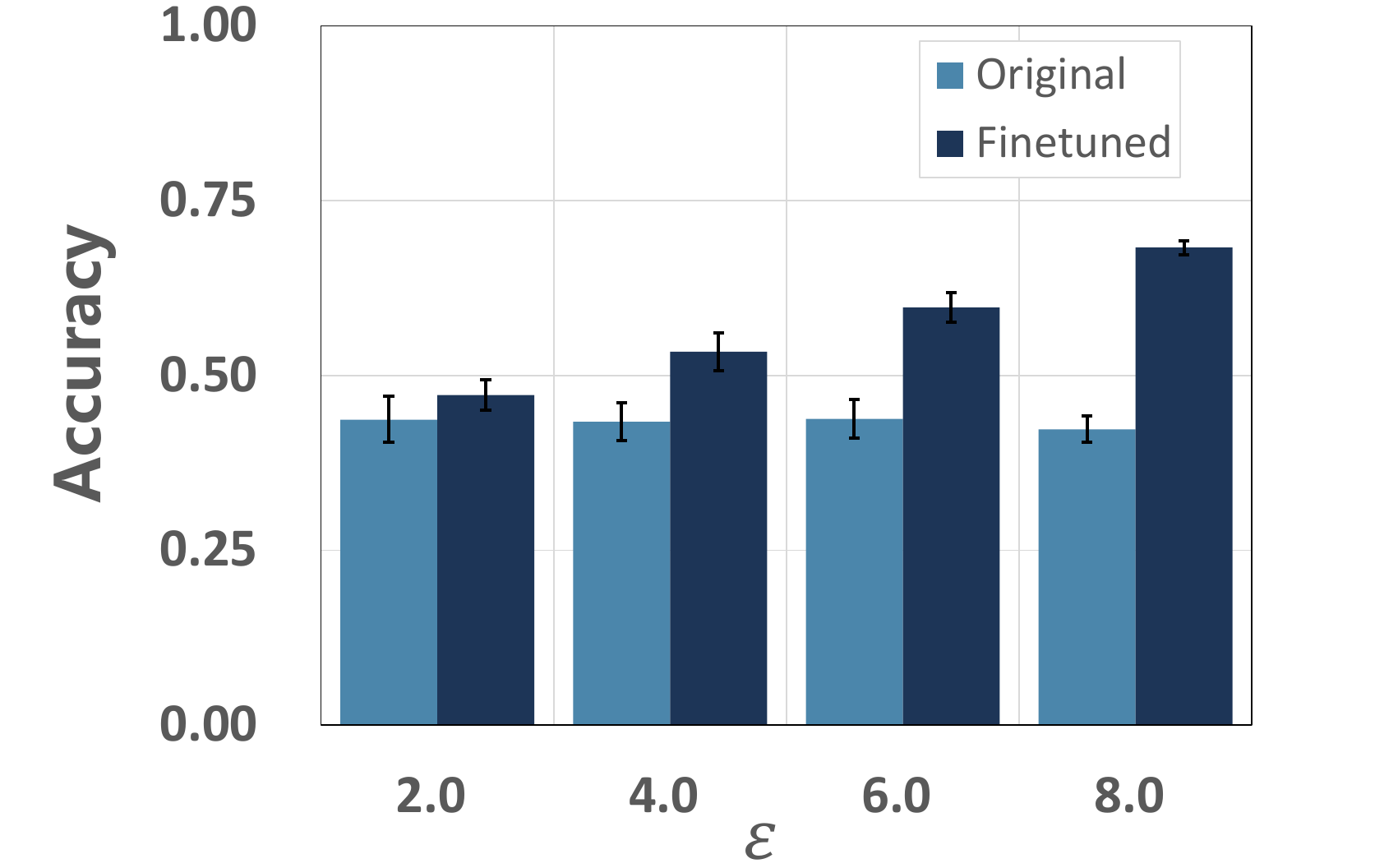}
    } &
    \hspace*{0pt}
    \subfloat[\textit{CUSTEXT+}\_AGNEWS\hspace*{-13.5pt}]{
        \includegraphics[width=0.35\textwidth]{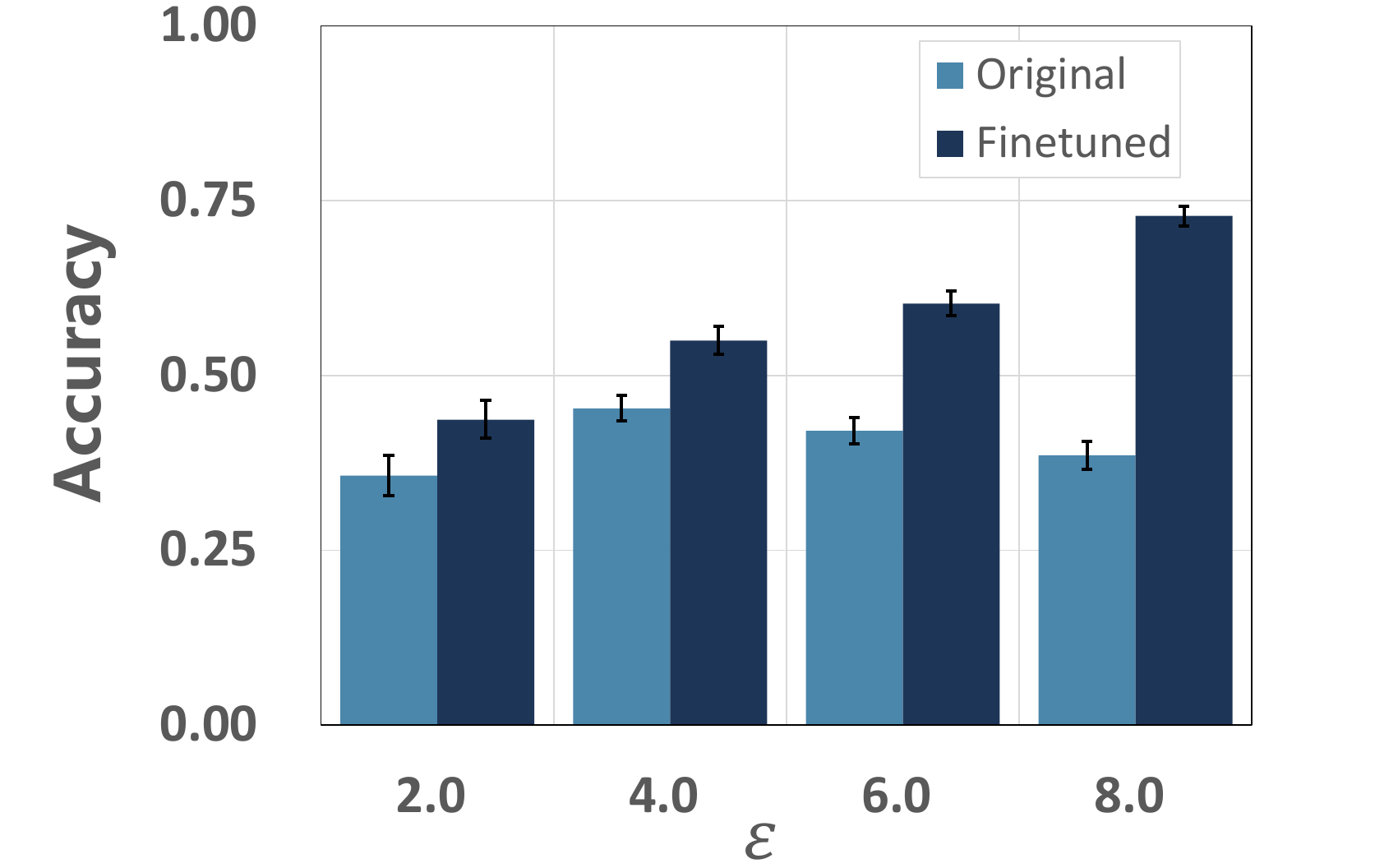}
    }
    \end{tabular}
    \caption{Classification accuracy of detector $h$ in \textit{Contextual Bayesian Attack} with \textit{CUSTEXT+}.}
    \label{fig:2}
\end{figure*}

\subsection{{Contextual Bayesian Attack} (based on Theorem 2)}
\mypara{Attack Strategy in \textit{Contextual Bayesian Attack}} \textit{Given a sanitized token $y$ and its sanitized context $\mathbf{c}$, the strategy for Contextual Bayesian Attack to reconstruct the original token of $y$ is to select the token ${x}'\in X$ according to the following rule:
\begin{equation*}
{x}' = \underset{{x}' \in X'}{\mathrm{arg\,max\,}} \frac{\Pr\left(y|{x}'\right) \cdot (\Pr\left({x}'\right)+\alpha^{-1})}{\Pr\left(y\right)} \times 
\end{equation*}
\begin{equation}\label{al1}
\frac{\Pr\left(\mathbf{c}| {x}', y\right)}{\Pr\left(\mathbf{c}|y\right)},
\end{equation}
\begin{equation}
X' = \underset{{x}' \in X}{\mathrm{arg\,top\,}K\,} \frac{\Pr\left(y|{x}'\right) \cdot (\Pr\left({x}'\right)+\alpha^{-1})}{\Pr(y)},
\end{equation}
where $\Pr\left({x}'\right)$ is the statistical probability in shadow dataset; $\alpha$ is the count of tokens in shadow dataset; $\Pr\left(\mathbf{c}| {x}', y\right)$ is obtained from a fine-tuned model on shadow dataset.}

\mypara{Transforming\;Reconstruction\;to\;Classification} 
To represent the probability $\Pr\left(\mathbf{c} | {x}', y\right)$, we transform it with constant tokens ${x}'$ and $y$: $\Pr\left(\mathbf{c} | {x}', y\right)=\Pr\left(f\left(\mathbf{c}, {x}'\right), f\left(\mathbf{c},y \right) | {x}', y\right)$,~where the function $f$ is defined as follows:

\textit{Given any \( \mathbf{c} = y_{1}, y_{2},  \ldots, y_{i-1}, y_{i+1}, \ldots, y_n \), let $f\left(\mathbf{c}, {x}'\right) = y_1, y_2, \ldots, y_{i-1}, {x}', y_{i+1}, \ldots, y_n$ and $f\left(\mathbf{c}, y\right)= y_1, y_2, \ldots, y_{i-1}, y, y_{i+1}, \ldots, y_n$.}

To calculate $\Pr\left(f\left(\mathbf{c}, {x}'\right), f\left(\mathbf{c},y \right) | {x}', y\right)$, we first fine-tune a BERT model \cite{devlin2018bert} to serve as a detector, denoted by $h$. This detector $h$ is a binary classifier that infers whether $f\left(\mathbf{c},{x}'\right)$ and $f\left(\mathbf{c}, y\right)$ correspond to ${x}'$ and $y$ in sanitization process (i.e., $\mathbf{c}$ is the sanitized context for ${x}'$ and $y$). Specifically, the input to detector $h$ consists of two token sequences, $f\left(\mathbf{c},{x}' \right)$ and $f\left(\mathbf{c}, y \right)$. The output of detector $h$ is either label $1$ or label $0$, where a value of $1$ indicates that $f\left(\mathbf{c},{x}' \right)$ and $f\left(\mathbf{c}, y \right)$ correspond to ${x}'$ and $y$, while a value of $0$ indicates that they do not. With these inputs and outputs, $h$ is defined as $h\left(f\left(\mathbf{c}, y\right), f\left(\mathbf{c}, {x}'\right)\right) \rightarrow \{0, 1\}$. The training samples of $h$ are constructed as follows:

\mypara{\textit{Step 1: Confirm the Attack Target}} We first confirm the attack target \(\Gamma\) in order to construct corresponding training samples. The attack target \(\Gamma\) comprises a set of sanitized tokens that need to be reconstructed to their original forms.

\mypara{\textit{Step 2: Construct the Training Samples}} We sanitize sentences in the shadow dataset using the differentially private algorithm. For each token \(y \in \Gamma\) in the sanitized sentences, we create a training sample consisting of two token sequences and a label for $y$. These two token sequences are the input of detector $h$: The first sequence includes the sanitized sentence containing \(y\) and \(\mathbf{c}\). The second sequence replaces the token \(y \in \Gamma\) in the first sequence with its reconstructed result \({x}'\) from the \textit{Context-free Bayesian Attack}. The label is the output of detector $h$: If \({x}'\) matches the original token of \(y\), indicating that $f\left(\mathbf{c},{x}'\right)$ and $f\left(\mathbf{c}, y\right)$ correspond to ${x}'$ and $y$ in sanitization process, we label the sample \(1\). Otherwise, we label it \(0\). To maintain label balance, for each sample labeled as \(1\), we generate a corresponding sample labeled as \(0\) using the same sanitized sentence but with the second most likely reconstructed token from the \textit{Context-free Bayesian Attack}. Similarly, for each sample labeled \(0\), we generate a sample labeled \(1\).

With the constructed training samples, we fine-tune the BERT model over three epochs with \textit{Adam\,Optimizer} \cite{kingma2014adam}, obtaining the detector $h$. We represent $\Pr\left(\mathbf{c} | {x}', y\right)=\Pr\left(f\left(\mathbf{c}, {x}'\right), f\left(\mathbf{c},y \right) | {x}', y\right)$ using prediction probability of $h$ for label $1$.

\Cref{fig:2} shows the classification accuracy of $h$, as detailed in~\Cref{clc}. Specifically, ``Original'' in \Cref{fig:2} refers to the performance of directly using BERT without any fine-tuning, while ``Finetuned'' refers to the performance after fine-tuning BERT on the detection task. Experimental results demonstrate the effectiveness of $h$ in determining whether $\mathbf{c}$ is the sanitized context of ${x}'$ and $y$.

\section{Experiments}\label{sec:exp}



\begin{figure*}[t!]
    \setlength{\tabcolsep}{-12pt} 
    \begin{tabular}{ccc}\hspace{8pt}
    \subfloat[\textit{CUSTEXT+}\_SST-2]{\label{main:cus_sst2}
        \includegraphics[width=0.37\textwidth]{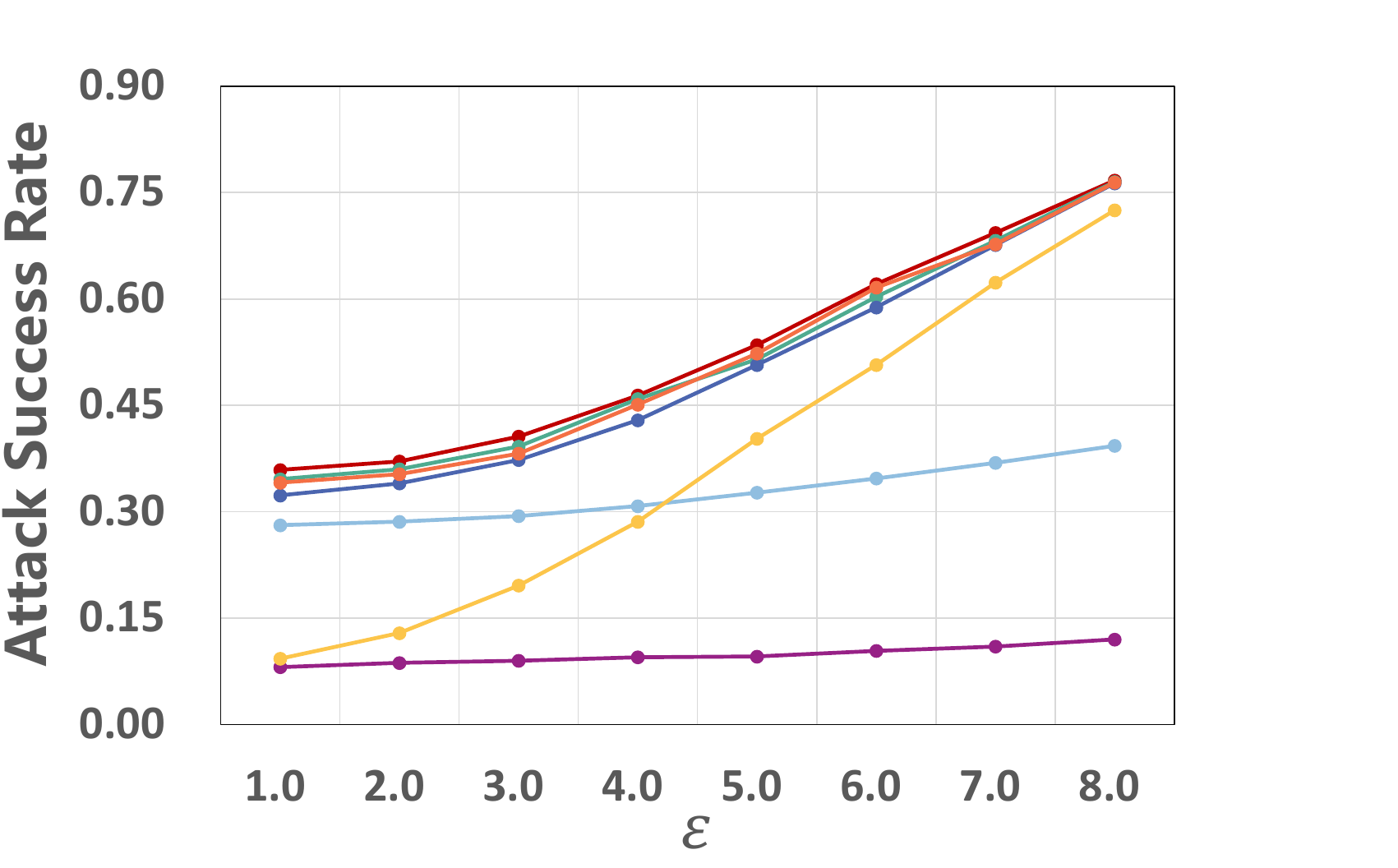}
    } & 
    \subfloat[\textit{CUSTEXT+}\_QNLI]{\label{main:cus_qnli}
        \includegraphics[width=0.37\textwidth]{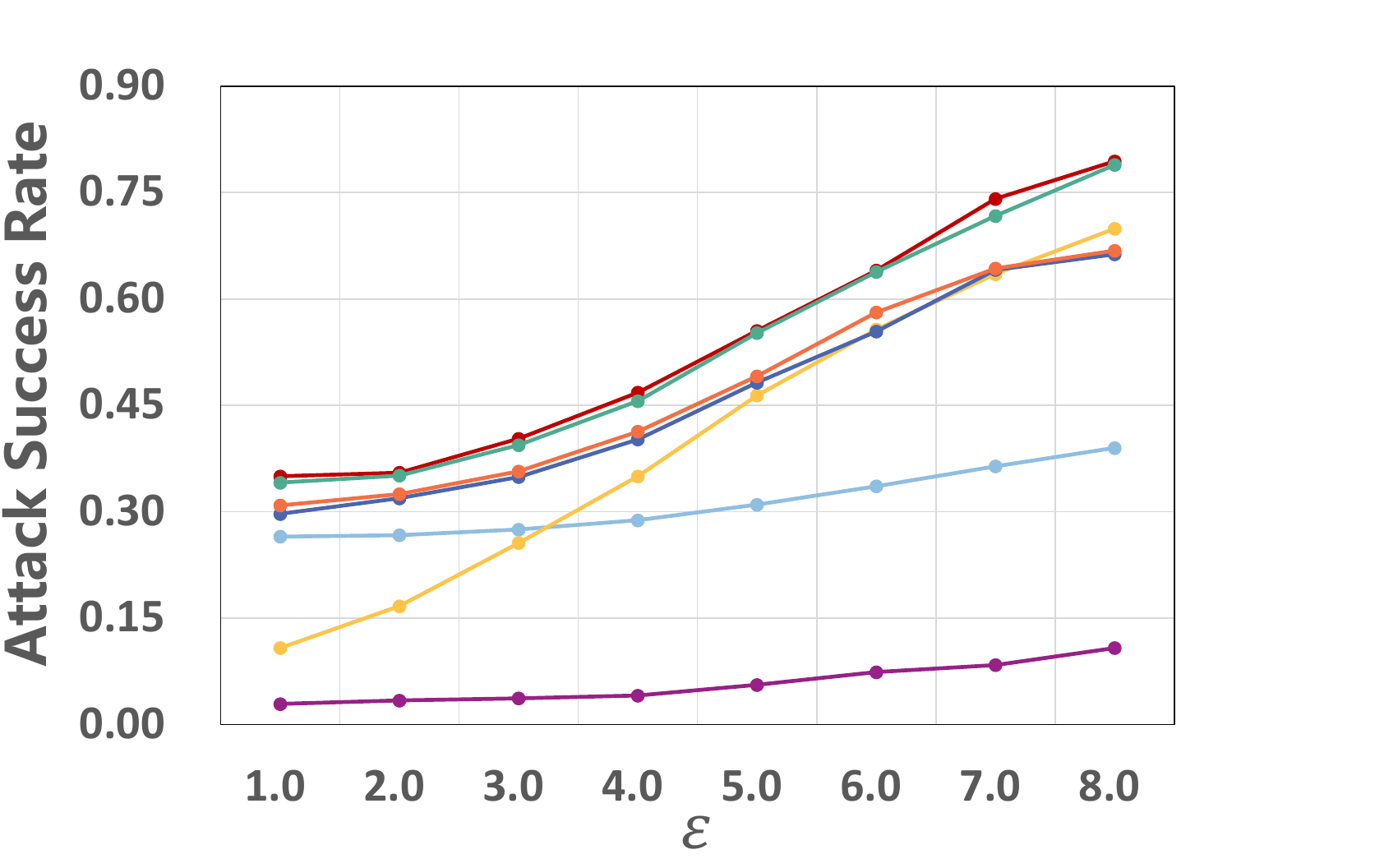}
    } &
    \subfloat[\textit{CUSTEXT+}\_AGNEWS]{\label{main:cus_agnews}
        \includegraphics[width=0.37\textwidth]{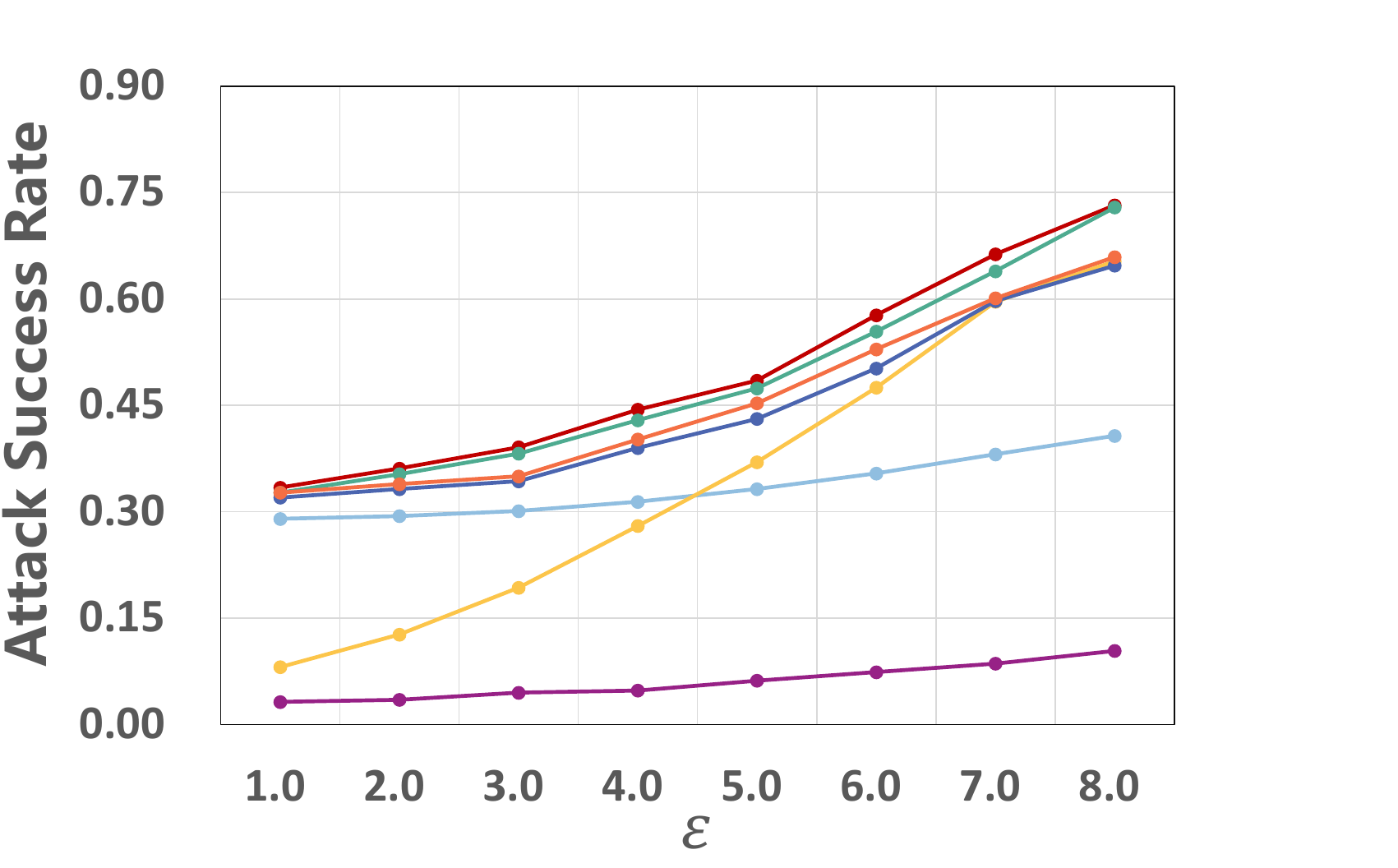}
    }\\    \hspace{8pt}
   \subfloat[\textit{SANTEXT+}\_SST-2]{\label{main:san_sst2}
        \includegraphics[width=0.37\textwidth]{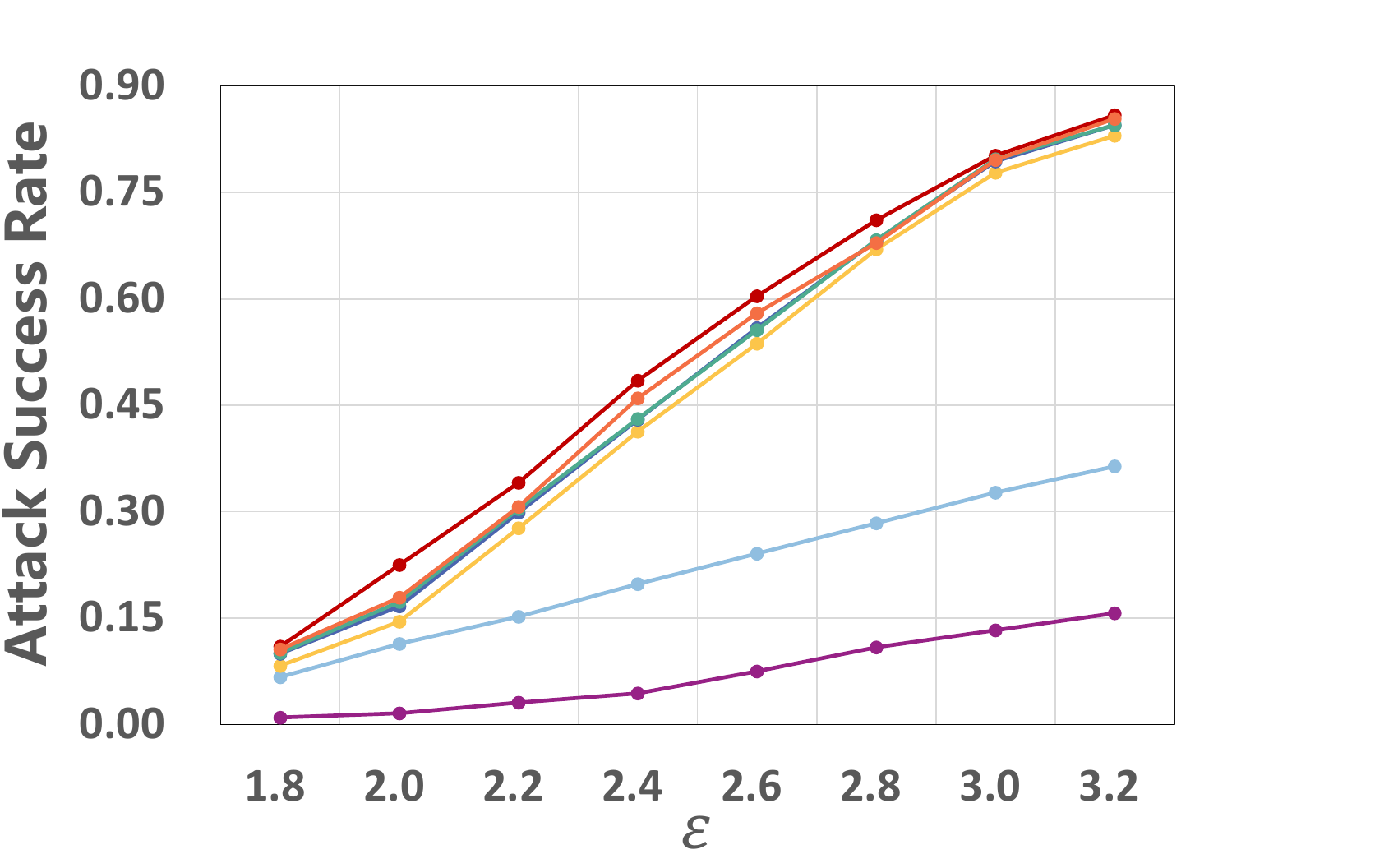}
    } & 
    \subfloat[\textit{SANTEXT+}\_QNLI]{\label{main:san_qnli}
        \includegraphics[width=0.37\textwidth]{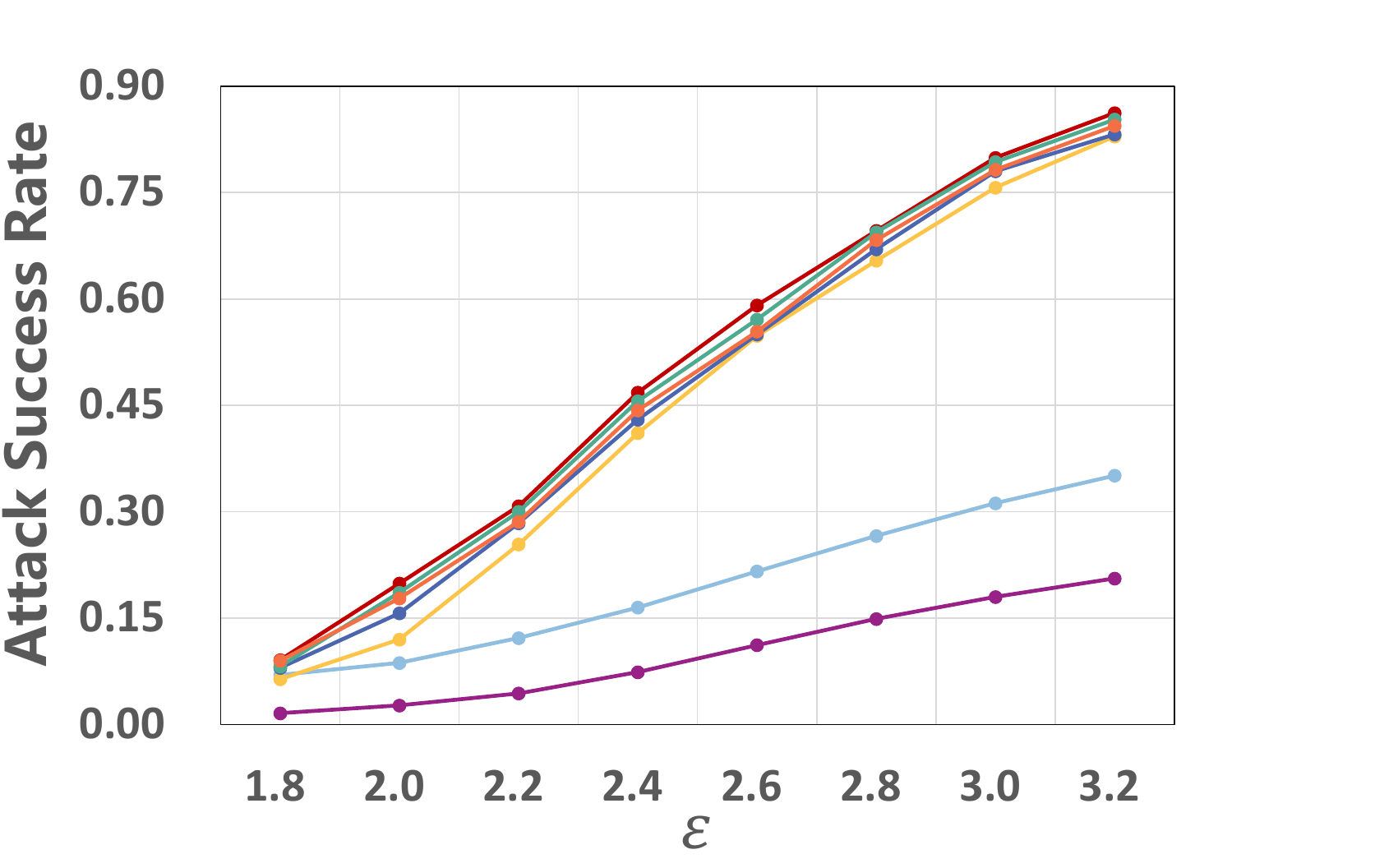}
    } &
    \subfloat[\textit{SANTEXT+}\_AGNEWS]{
        \includegraphics[width=0.37\textwidth]{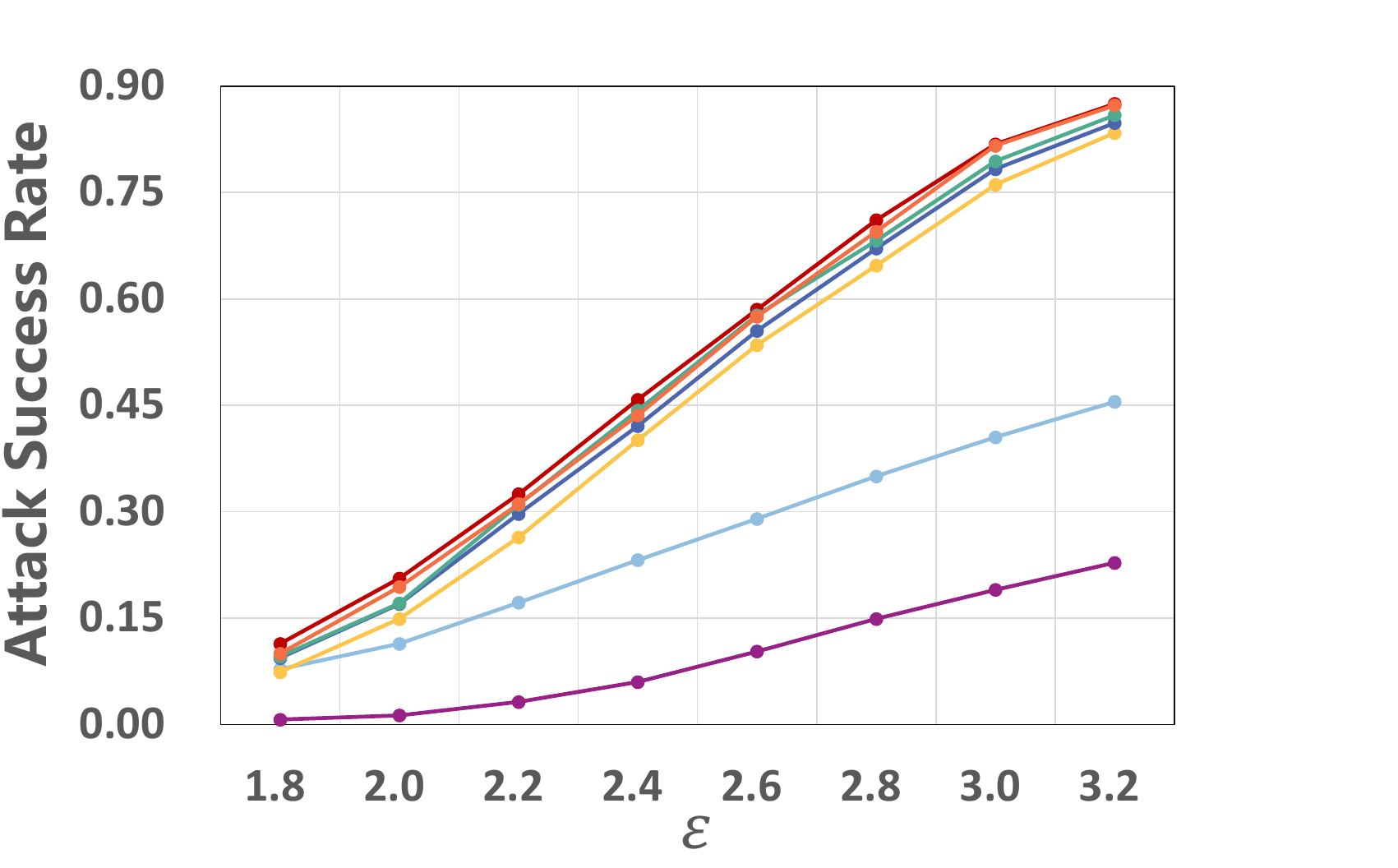}
    }
    \end{tabular}\\
    \centering
    \hspace{2pt}
    \includegraphics[width=1\textwidth]{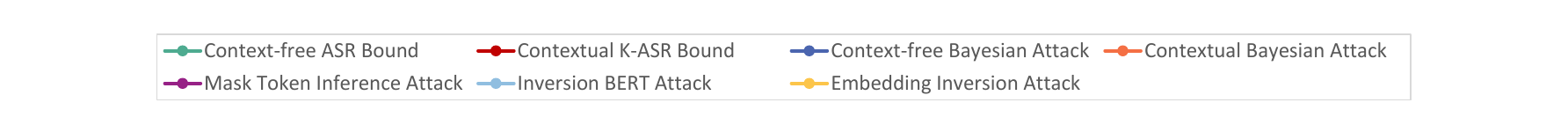}
    \vspace{-0.3cm}
    \caption{ASR of reconstructing sensitive tokens against \textit{CUSTEXT+} and \textit{SANTEXT+} with $K=10$.}
    \label{tab:attack-result}
\end{figure*}

\subsection{Experimental Setup}
\mypara{Datasets} To validate the effectiveness of our attacks, we conduct experiments on four datasets, SST-2~\cite{socher-etal-2013-recursive}, AGNEWS~\cite{2zhang2015character}, QNLI~\cite{wang2018glue}, and Yelp~\cite{2zhang2015character}. These datasets cover a wide range of sensitive entities and personally identifiable information (PII) (detailed in~\Cref{dats}). 

\mypara{Defense Methods} We implement reconstruction attacks against two backbone methods of text sanitization using their default settings for sensitive tokens and parameters: \textit{CUSTEXT+}~\cite{chen2023customized} and \textit{SANTEXT+}~\cite{yue-etal-2021-differential}, both of which are widely used in DP research. 

\mypara{Baseline Attacks} Our work presents the first systematic study of reconstruction attacks against text sanitization. Since there is no other research specifically focusing on reconstruction attacks against text sanitization, we compare our methods with three adaptive attacks from prior defenses \cite{tong2023InferDPT,zhou2023textobfuscator}: \textit{Embedding} \textit{Inversion} \textit{Attack}~\cite{qu2021natural}, \textit{Inversion\,BERT\,Attack}~\cite{kugler2021invbert}, and \textit{Mask\,Token\,Inference\,Attack}~\cite{yue-etal-2021-differential}.

\mypara{Implementation} We run experiments on a 
RTX A6000 GPU. We randomly select $20,000$ sentences from the entire dataset to form a private dataset, whose sensitive tokens are perturbed to sanitized tokens. In \textit{CUSTEXT+}, we calculate the ASR of the successful reconstruction for the sensitive tokens from $3,000$ sanitized tokens. In \textit{SANTEXT+}, this calculation of ASR is based on $1,000$ sanitized tokens. For a shadow dataset, we randomly select $1,000$ sentences from the entire dataset, which are disjoint from those to reconstruct. 

More implementation details of our experiments can be found in~\Cref{imp}. 

\subsection{Performance of Reconstruction Attacks}\label{att}

\mypara{Baseline Reconstruction Attacks} In this paper, we identify three factors that facilitate reconstructing original tokens: the sampling probabilities of DP, the probability distribution of input tokens, and the sanitized context. However, the baseline attacks generally fall short in a design that considers all three factors. The limitations of them are demonstrated by the experimental results. For instance, at $\epsilon=3.2$ on the QNLI, as shown in \Cref{main:san_qnli}, both \textit{Inversion BERT Attack} and \textit{Mask Token Inference Attack} successfully reconstructed the sanitized tokens less than half of those in others. Moreover, at $\epsilon=1.0$ on the AGNEWS, as shown in \Cref{main:cus_agnews}, the ASR of \textit{Embedding Inversion Attack} is less than $0.09$, which is approximately one-third of that in \textit{Inversion BERT Attack}.

\mypara{Attack Success Rate Bounds} Experimental results in \Cref{tab:attack-result} indicate that both \textit{Context-free ASR Bound} and \textit{Contextual K-ASR Bound} more accurately evaluate the effectiveness of DP-based text sanitization compared to baselines. {For instance, with a privacy parameter $\epsilon=1.0$ on QNLI, as shown in \Cref{main:cus_qnli}, the ASR of the \textit{Context-free ASR Bound} is $28.7\%$ higher than that of the \textit{Inversion BERT Attack}, while the ASR of the \textit{Contextual K-ASR Bound} is $32.1\%$ higher than that of the \textit{Inversion BERT Attack}.} Furthermore, our experiments demonstrate that the \textit{Contextual K-ASR Bound} is consistently higher than the \textit{Context-free ASR Bound}. Specifically, as shown in \Cref{main:san_sst2}, the \textit{Contextual K-ASR Bound} exceeds the \textit{Context-free ASR Bound} by 30.1\% with $\epsilon=2.0$ on SST-2. 

\mypara{Practical Attacks from Theorems} According to \Cref{tab:attack-result}, the ASR of both the \textit{Context-free Bayesian Attack} and the \textit{Contextual Bayesian Attack} exceeded that of the baseline attacks in most cases. Notably, the \textit{Contextual Bayesian Attack} achieves a 46.4\% improvement in ASR over the \textit{Inversion BERT Attack} at a privacy budget of $\epsilon=4.0$ on the SST-2 according to \Cref{main:cus_sst2}. With the same experimental setting, the \textit{Context-free Bayesian Attack} achieves a 39.2\% improvement in ASR over the \textit{Inversion BERT Attack}. Furthermore, the ASRs of our proposed practical attacks are approaching their corresponding theoretical ASR bounds. For example, with $\epsilon=1.8$ on SST-2 as shown in \Cref{main:san_sst2}, the ASR of the \textit{Context-free Bayesian Attack} was only $0.001$ less than that of the \textit{Context-free ASR Bound}. With the same setting, the ASR of the \textit{Contextual Bayesian Attack} was $0.004$ less than that of the \textit{Contextual K-ASR Bound}. {More discussion on the performance of reconstruction attacks with large $\epsilon$ can be found in \Cref{ct24}.}

Furthermore, our reconstruction attacks remain comparable to the baseline methods even with a misaligned shadow dataset. The detailed experiments can be found in \Cref{ct23}.

 \begin{figure}[t]
    \setlength{\tabcolsep}{-4pt} 
    \centering
    \vspace{0cm}
    \begin{tabular}{cc}\hspace{-0.37cm}\hspace{7pt}
    \subfloat[\textit{CUSTEXT+}\_Yelp]{\label{pii:cus}
        \includegraphics[width=0.29\textwidth]{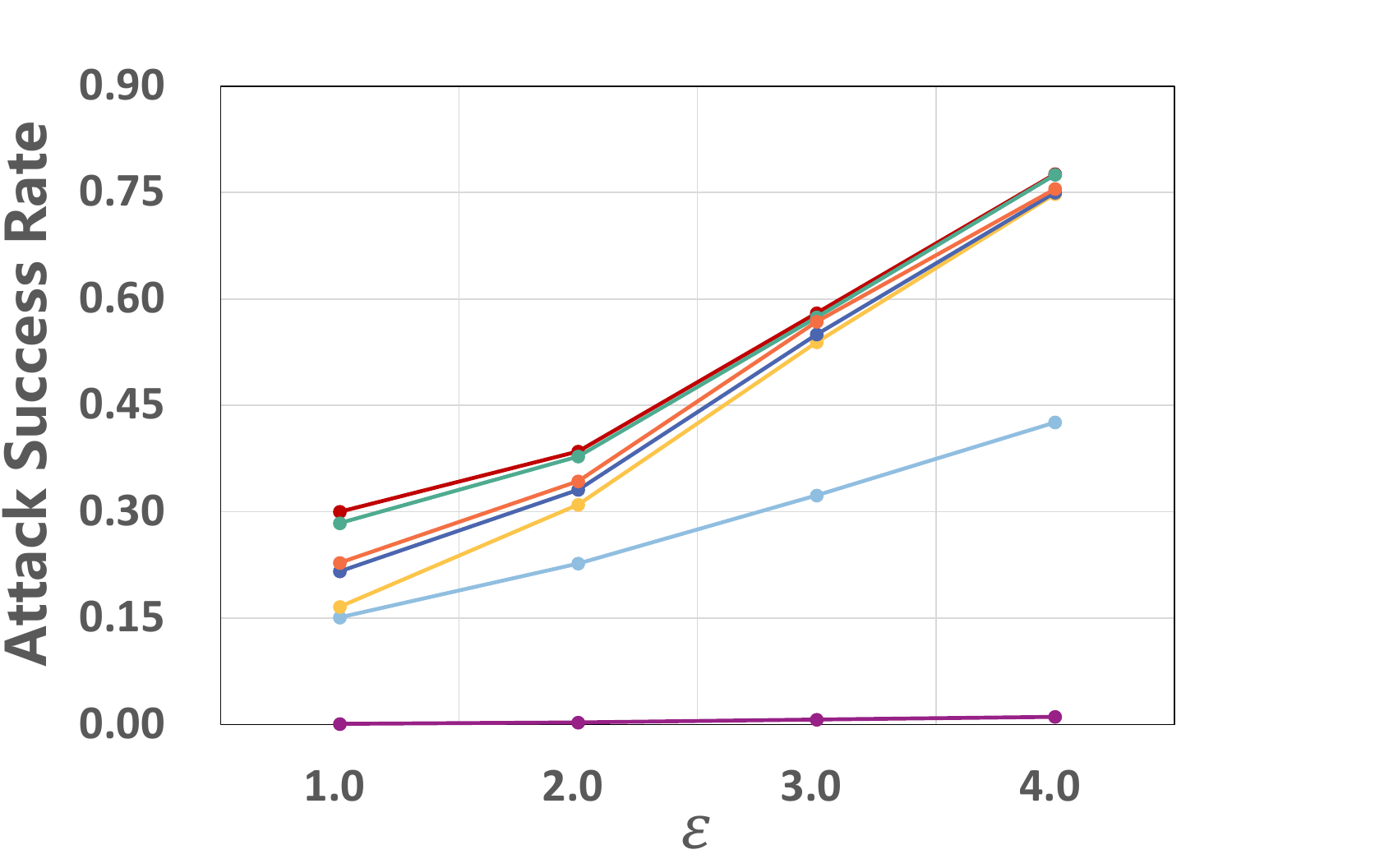}
    } &\hspace{-0.87cm}\hspace{7pt}
    \subfloat[\textit{SANTEXT+}\_Yelp]{
        \includegraphics[width=0.29\textwidth]{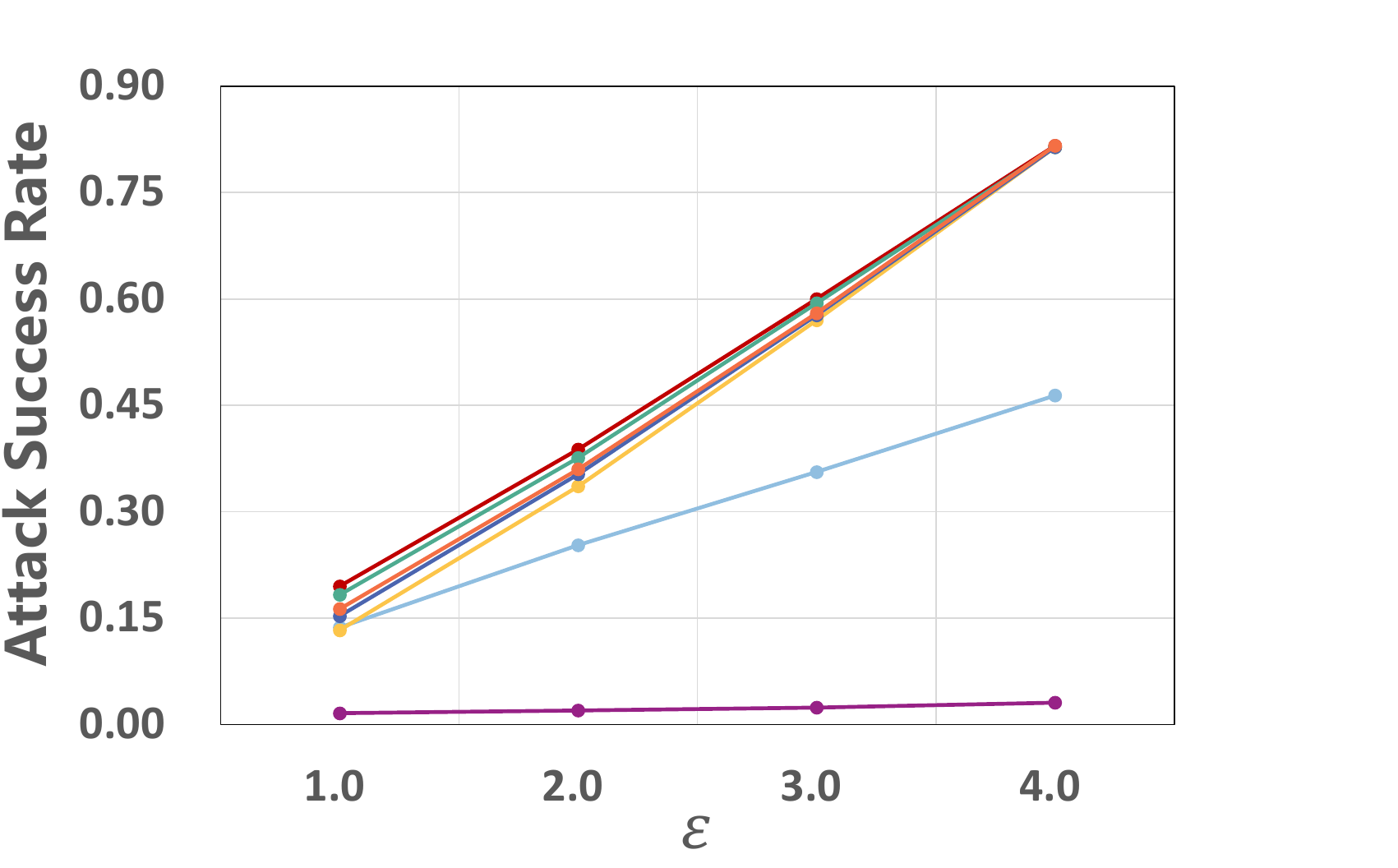}
    }\\
        \multicolumn{2}{c}{\hspace{-0.5cm}\includegraphics[width=1.2\columnwidth]{fig/main_t.pdf}}
    \end{tabular}
    \caption{ASR of reconstructing PII tokens.}
    \label{pii}
\end{figure}
\begin{table}[t]
\setlength{\tabcolsep}{1.8mm} 
\centering
\caption{ASR of reconstruction attacks\;with\;various\;$K$.}
\label{kvalue}
\resizebox{\columnwidth}{!}{
\begin{tabular}{l|c|c|c|c|c}
\toprule
\small\textbf{Attack Strategy}  & \small\textbf{K=1}  & \small\textbf{K=5} & \small\textbf{K=10} & \small\textbf{K=15} & \small\textbf{K=20}   \\ 
\midrule
\small {Contextual Bound}&\small 0.499 &\small  0.520 & \small 0.525 &\small  0.525 &\small  0.525\\ 
\small {Contextual Attack}&\small 0.470  &\small  0.497 &\small  0.503 & \small 0.503 &\small  0.503\\ 
\bottomrule
\end{tabular}
}
\end{table}
\subsection{Reconstructing PII Tokens}
Text sanitization is widely adopted in de-identifying personal information \cite{amazon,Priva}. 
To bring the evaluations closer to real-life scenarios, we conducted reconstruction attacks on the Yelp dataset to recover the PII of personal names \cite{Presidio} from sanitized sentences. Experimental results, as illustrated in \Cref{pii}, indicate the improvement of our attacks over the baselines. For instance, as shown in \Cref{pii:cus}, the \textit{Context-free ASR Bound} increased the ASR by $41.5\%$ with a privacy parameter $\epsilon=1.0$ over \textit{Embedding Inversion Attack}; at the same experimental setting, \textit{Contextual K-ASR Bound} increased ASR by $96.1\%$ over \textit{Inversion BERT Attack}.

\subsection{Time Cost and Parameter Influence}\label{ti2}

\mypara{Time Costs} We estimated the time required for reconstructing $1,000$ sanitized tokens. 
We conducted this experiment $10$ times to calculate standard deviations (1-sigma) using the closed form formula~\cite{altman2005standard}. 
As depicted in \Cref{time2}, computations for the \textit{Context-free ASR Bound} and the \textit{Context-free Bayesian Attack} were completed in less than $5$ minutes. Both of them enable quick evaluation of the effectiveness of text sanitization. For a comprehensive evaluation, \textit{Contextual Bayesian Attack} and \textit{Contextual K-ASR Bound} completed reconstructions in less than $25$ minutes. We further compare the time costs of various reconstruction attacks in \Cref{ct22}.

\begin{figure}[t]
\vspace{0.5cm}
\hspace{1.3cm}
\includegraphics[width=0.64\columnwidth]{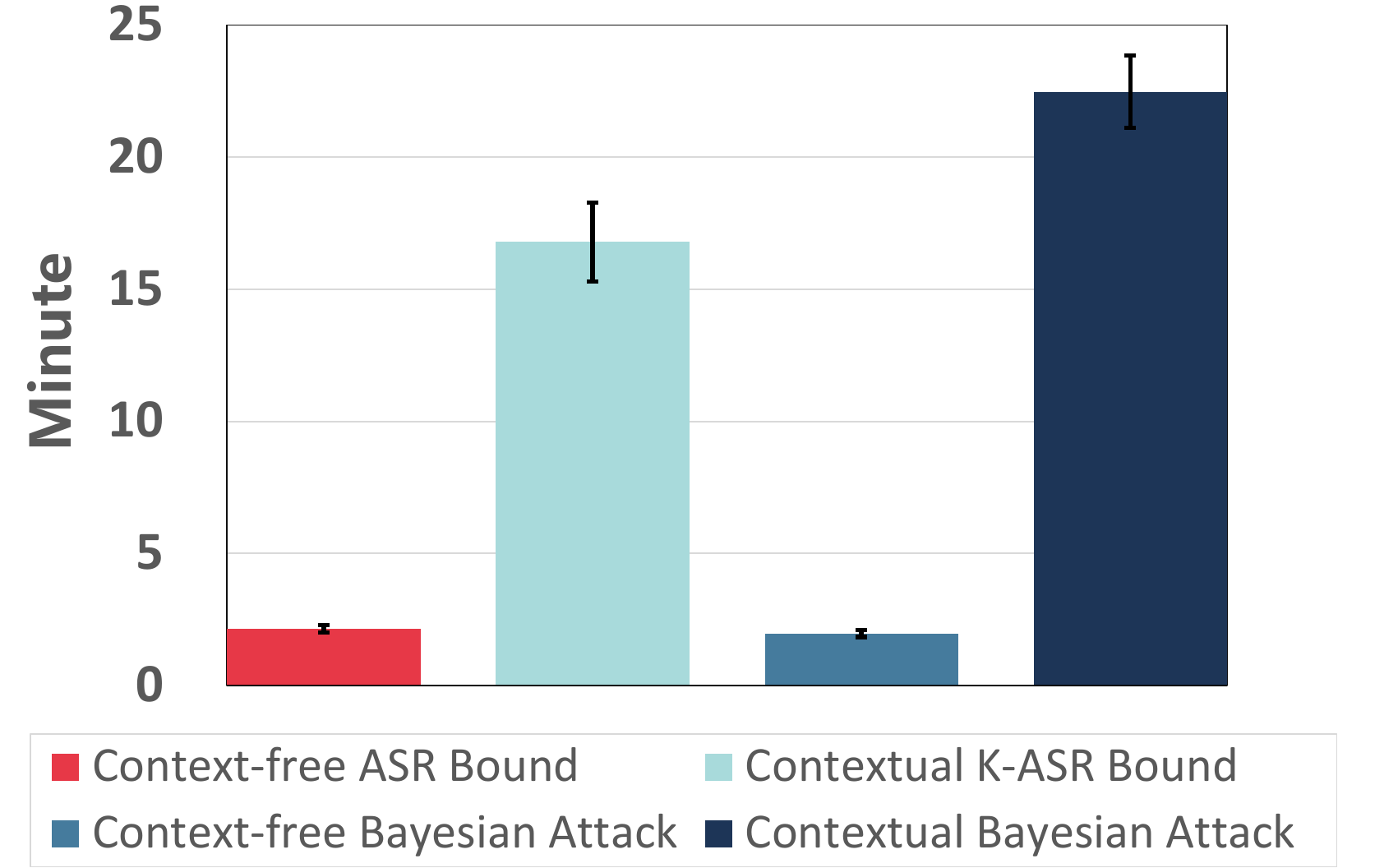}
\caption{Time costs of reconstructing $1,000$ tokens.}
\label{time2}
\end{figure}
\begin{table}[t]
\setlength{\tabcolsep}{1mm} 
\centering
\caption{ASR of reconstructing sensitive tokens with different shadow(S) / private(P) dataset size ratios.}
\label{sd}
\resizebox{\columnwidth}{!}{
\begin{tabular}{l|c|c|c|c}
\toprule
\small \textbf{Attack Strategy} & \small \textbf{{S/P=1\%}} & \small \textbf{{S/P=3\%}} & \small \textbf{{S/P=5\%}} & \small \textbf{{S/P=8\%}} \\ 
\midrule
\small Context-free\,Attack& \small 0.492  & \small 0.494 & \small 0.496 & \small 0.506 \\ 
\small Contextual\,Attack& \small 0.473   & \small 0.487 & \small 0.505 & \small 0.508 \\ 
\bottomrule
\end{tabular}}
\end{table}

\mypara{Parameter Influences} Additionally, we test the influence of parameters on our attacks using SST-2 and \textit{CUSTEXT+} at $\epsilon=4.0$. \Cref{kvalue} shows the influence of $K$ in both \textit{Contextual Bayesian Attack} ({Contextual Attack}) and \textit{Contextual K-ASR Bound} ({Contextual Bound}). When $K$ is smaller than $10$, the ASR of both grows with $K$. When $K$ exceeds $10$, the growth of ASR tends to flatten. \Cref{sd} shows the ASRs for different size ratios between the shadow dataset and the private dataset. With a smaller ratio, the ASR of \textit{Contextual Bayesian Attack} ({Contextual Attack}) and \textit{Context-free Bayesian Attack} ({Context-free Attack}) decreases.



\section{Conclusion}\label{cln}
In this work, we implement both context-free and contextual optimal reconstruction attacks against text sanitization. We derive their bounds on ASR to guide the evaluation of sanitization effectiveness. We propose two practical attacks for real-world implementations. Evaluation results validate that our attacks surpass baseline methods.

This work represents an initial attempt to provide a more precise evaluation of sanitization effectiveness compared to current empirical attacks.  Through this endeavor, we hope our work contributes to the research of privacy-preserving NLP.

\section{Limitations}\label{lm}
The results presented in our paper have two main limitations. First, we have not found a method to calculate the tight ASR bound of the \textit{Contextual Optimal Reconstruction Attack} due to the challenge of accurately computing the theoretical value of $\Pr\left(\mathbf{c}|{x}', y\right)$. Therefore, the calculation of this tight ASR bound remains an open question. Second, the \textit{Contextual Bayesian Attack} entails model fine-tuning and inference. There is still room for a more time-efficient reconstruction attack.

{Furthermore, recent studies on text anonymization~\cite{staab2024large,StaabVBV24}, which leverage advanced LLMs to sanitize sensitive tokens in text, present an exciting avenue for future research. However, the complexity of LLM-based methods introduces unique challenges that require careful consideration. Consequently, the privacy guarantees of these methods are not explored in this paper. We believe that investigating these techniques could provide valuable insights and represent an important direction for future work.}


\section*{Acknowledgments}
This research is supported by the National Natural Science Foundation of China under Grant 62472398, Grant U2336206, and Grant U2436601, as well as by the National Research Foundation, Singapore and Infocomm Media Development Authority under its Trust Tech Funding Initiative (No. DTCRGC-04). Any opinions, findings and conclusions or recommendations expressed in this material are those of the author(s) and do not reflect the views of National Research Foundation, Singapore and Infocomm Media Development Authority.

\bibliography{custom}

\begin{thebibliography}{48}
\providecommand{\natexlab}[1]{#1}

\bibitem[{Sam(2023)}]{Sam}
 2023.
\newblock Samsung employees leak proprietary information via chatgpt.
\newblock https://www.cshub.com/data/news/iotw-samsung-employees-allegedly-leak-proprietary-information-via-chatgpt.

\bibitem[{Altman and Bland(2005)}]{altman2005standard}
Douglas~G Altman and J~Martin Bland. 2005.
\newblock Standard deviations and standard errors.
\newblock \emph{Bmj}, 331(7521):903.

\bibitem[{Alvim et~al.(2015)Alvim, Andr{\'e}s, Chatzikokolakis, Degano, and Palamidessi}]{alvim2015information}
M{\'a}rio~S Alvim, Miguel~E Andr{\'e}s, Konstantinos Chatzikokolakis, Pierpaolo Degano, and Catuscia Palamidessi. 2015.
\newblock On the information leakage of differentially-private mechanisms.
\newblock \emph{Journal of Computer Security}, 23(4):427--469.

\bibitem[{Amazon(2024)}]{amazon}
Amazon. 2024.
\newblock Discover and protect your sensitive data at scale.
\newblock https://aws.amazon.com/macie/.

\bibitem[{Bain and Engelhardt(1992)}]{bain1992introduction}
Lee~J Bain and Max Engelhardt. 1992.
\newblock \emph{Introduction to probability and mathematical statistics}, volume~4.
\newblock Duxbury Press Belmont, CA.

\bibitem[{Balle et~al.(2022)Balle, Cherubin, and Hayes}]{balle2022reconstructing}
Borja Balle, Giovanni Cherubin, and Jamie Hayes. 2022.
\newblock Reconstructing training data with informed adversaries.
\newblock In \emph{2022 IEEE Symposium on Security and Privacy (SP)}, pages 1138--1156. IEEE.

\bibitem[{Carlini et~al.(2022)Carlini, Chien, Nasr, Song, Terzis, and Tramer}]{carlini2022membership}
Nicholas Carlini, Steve Chien, Milad Nasr, Shuang Song, Andreas Terzis, and Florian Tramer. 2022.
\newblock Membership inference attacks from first principles.
\newblock In \emph{2022 IEEE Symposium on Security and Privacy (SP)}, pages 1897--1914. IEEE.

\bibitem[{Chang et~al.(2023)Chang, Wang, Wang, Wu, Yang, Zhu, Chen, Yi, Wang, Wang et~al.}]{chang2023survey}
Yupeng Chang, Xu~Wang, Jindong Wang, Yuan Wu, Linyi Yang, Kaijie Zhu, Hao Chen, Xiaoyuan Yi, Cunxiang Wang, Yidong Wang, et~al. 2023.
\newblock A survey on evaluation of large language models.
\newblock \emph{ACM Transactions on Intelligent Systems and Technology}.

\bibitem[{Chen et~al.(2023)Chen, Mo, Wang, Chen, Nie, Wang, and Cui}]{chen2023customized}
Sai Chen, Fengran Mo, Yanhao Wang, Cen Chen, Jian-Yun Nie, Chengyu Wang, and Jamie Cui. 2023.
\newblock A customized text sanitization mechanism with differential privacy.
\newblock In \emph{Findings of the Association for Computational Linguistics: ACL 2023}, pages 5747--5758.

\bibitem[{Cherubin et~al.(2019)Cherubin, Chatzikokolakis, and Palamidessi}]{cherubin2019f}
Giovanni Cherubin, Konstantinos Chatzikokolakis, and Catuscia Palamidessi. 2019.
\newblock F-bleau: fast black-box leakage estimation.
\newblock In \emph{2019 IEEE Symposium on Security and Privacy (SP)}, pages 835--852. IEEE.

\bibitem[{Chowdhary(2023)}]{report}
Krishi Chowdhary. 2023.
\newblock Custom openai chatbots leak secrets.
\newblock https://techreport.com/news/custom-openai-chatbots-leak-secrets-pose-privacy-threats/.

\bibitem[{Cormode et~al.(2018)Cormode, Jha, Kulkarni, Li, Srivastava, and Wang}]{cormode2018privacy}
Graham Cormode, Somesh Jha, Tejas Kulkarni, Ninghui Li, Divesh Srivastava, and Tianhao Wang. 2018.
\newblock Privacy at scale: Local differential privacy in practice.
\newblock In \emph{Proceedings of the 2018 International Conference on Management of Data}, pages 1655--1658.

\bibitem[{Devlin et~al.(2019)Devlin, Chang, Lee, and Toutanova}]{devlin2018bert}
Jacob Devlin, Ming-Wei Chang, Kenton Lee, and Kristina Toutanova. 2019.
\newblock \href {https://doi.org/10.18653/v1/N19-1423} {{BERT}: Pre-training of deep bidirectional transformers for language understanding}.
\newblock In \emph{Proceedings of the 2019 Conference of the North {A}merican Chapter of the Association for Computational Linguistics: Human Language Technologies, Volume 1 (Long and Short Papers)}, pages 4171--4186, Minneapolis, Minnesota. Association for Computational Linguistics.

\bibitem[{Du et~al.(2023)Du, Yue, Chow, Wang, Huang, and Sun}]{du2023dp}
Minxin Du, Xiang Yue, Sherman~SM Chow, Tianhao Wang, Chenyu Huang, and Huan Sun. 2023.
\newblock Dp-forward: Fine-tuning and inference on language models with differential privacy in forward pass.
\newblock In \emph{Proceedings of the 2023 ACM SIGSAC Conference on Computer and Communications Security}, pages 2665--2679.

\bibitem[{Dwork(2006)}]{dwork2006differential}
Cynthia Dwork. 2006.
\newblock Differential privacy.
\newblock In \emph{International colloquium on automata, languages, and programming}, pages 1--12. Springer.

\bibitem[{Hong et~al.(2024)Hong, Wang, Zhang, LI, Li, and Wang}]{hong2024dpopt}
Junyuan Hong, Jiachen~T. Wang, Chenhui Zhang, Zhangheng LI, Bo~Li, and Zhangyang Wang. 2024.
\newblock \href {https://openreview.net/forum?id=Ifz3IgsEPX} {{DP}-{OPT}: Make large language model your privacy-preserving prompt engineer}.
\newblock In \emph{The Twelfth International Conference on Learning Representations}.

\bibitem[{Joyce(2003)}]{joyce2003bayes}
James Joyce. 2003.
\newblock Bayes’ theorem.

\bibitem[{Kairouz et~al.(2014)Kairouz, Oh, and Viswanath}]{kairouz2014extremal}
Peter Kairouz, Sewoong Oh, and Pramod Viswanath. 2014.
\newblock Extremal mechanisms for local differential privacy.
\newblock \emph{Advances in neural information processing systems}, 27.

\bibitem[{Kim et~al.(2024)Kim, Yun, Lee, Gubri, Yoon, and Oh}]{kim2024propile}
Siwon Kim, Sangdoo Yun, Hwaran Lee, Martin Gubri, Sungroh Yoon, and Seong~Joon Oh. 2024.
\newblock Propile: Probing privacy leakage in large language models.
\newblock \emph{Advances in Neural Information Processing Systems}, 36.

\bibitem[{Kingma and Ba(2015)}]{kingma2014adam}
Diederik~P. Kingma and Jimmy Ba. 2015.
\newblock \href {http://arxiv.org/abs/1412.6980} {Adam: {A} method for stochastic optimization}.
\newblock In \emph{3rd International Conference on Learning Representations, {ICLR} 2015, San Diego, CA, USA, May 7-9, 2015, Conference Track Proceedings}.

\bibitem[{Kugler et~al.(2021)Kugler, M{\"u}nker, H{\"o}hmann, and Rettinger}]{kugler2021invbert}
Kai Kugler, Simon M{\"u}nker, Johannes H{\"o}hmann, and Achim Rettinger. 2021.
\newblock Invbert: Reconstructing text from contextualized word embeddings by inverting the bert pipeline.
\newblock \emph{arXiv preprint arXiv:2109.10104}.

\bibitem[{Meng et~al.(2023)Meng, Kejiang, Jie, Yuang, Weiming, Nenghai, Tianwei, and Zhikun}]{tong2023InferDPT}
Tong Meng, Chen Kejiang, Zhang Jie, Qi~Yuang, Zhang Weiming, Yu~Nenghai, Zhang Tianwei, and Zhang Zhikun. 2023.
\newblock Inferdpt: Privacy-preserving inference for black-box large language models.
\newblock \emph{arXiv preprint arXiv:2310.12214}.

\bibitem[{Microsoft(2024{\natexlab{a}})}]{Presidio}
Microsoft. 2024{\natexlab{a}}.
\newblock Data protection and de-identification sdk.
\newblock https://microsoft.github.io/presidio/.

\bibitem[{Microsoft(2024{\natexlab{b}})}]{Priva}
Microsoft. 2024{\natexlab{b}}.
\newblock Modernize privacy program with a platform your.
\newblock https://www.microsoft.com/en-us/security/business/microsoft-priva.

\bibitem[{Murakami and Kawamoto(2019)}]{murakami2019utility}
Takao Murakami and Yusuke Kawamoto. 2019.
\newblock $\{$Utility-optimized$\}$ local differential privacy mechanisms for distribution estimation.
\newblock In \emph{28th USENIX Security Symposium (USENIX Security 19)}, pages 1877--1894.

\bibitem[{OpenAI(2023)}]{gpt_store}
OpenAI. 2023.
\newblock We’re launching the gpt store to help you find useful and popular custom versions of chatgpt.
\newblock https://openai.com/blog/introducing-the-gpt-store.

\bibitem[{Papadopoulou et~al.(2022)Papadopoulou, Yu, Lison, and {\O}vrelid}]{papadopoulou2022neural}
Anthi Papadopoulou, Yunhao Yu, Pierre Lison, and Lilja {\O}vrelid. 2022.
\newblock Neural text sanitization with explicit measures of privacy risk.
\newblock In \emph{Proceedings of the 2nd Conference of the Asia-Pacific Chapter of the Association for Computational Linguistics and the 12th International Joint Conference on Natural Language Processing (Volume 1: Long Papers)}, pages 217--229.

\bibitem[{Plant et~al.(2021)Plant, Gkatzia, and Giuffrida}]{plant2021cape}
Richard Plant, Dimitra Gkatzia, and Valerio Giuffrida. 2021.
\newblock \href {https://doi.org/10.18653/V1/2021.EMNLP-MAIN.628} {{CAPE:} context-aware private embeddings for private language learning}.
\newblock In \emph{Proceedings of the 2021 Conference on Empirical Methods in Natural Language Processing, {EMNLP} 2021, Virtual Event / Punta Cana, Dominican Republic, 7-11 November, 2021}, pages 7970--7978. Association for Computational Linguistics.

\bibitem[{Qu et~al.(2021)Qu, Kong, Yang, Zhang, Bendersky, and Najork}]{qu2021natural}
Chen Qu, Weize Kong, Liu Yang, Mingyang Zhang, Michael Bendersky, and Marc Najork. 2021.
\newblock Natural language understanding with privacy-preserving bert.
\newblock In \emph{Proceedings of the 30th ACM International Conference on Information \& Knowledge Management}, pages 1488--1497.

\bibitem[{Rajpurkar et~al.(2016)Rajpurkar, Zhang, Lopyrev, and Liang}]{DBLP:conf/emnlp/RajpurkarZLL16}
Pranav Rajpurkar, Jian Zhang, Konstantin Lopyrev, and Percy Liang. 2016.
\newblock \href {https://doi.org/10.18653/V1/D16-1264} {Squad: 100, 000+ questions for machine comprehension of text}.
\newblock In \emph{Proceedings of the 2016 Conference on Empirical Methods in Natural Language Processing, {EMNLP} 2016, Austin, Texas, USA, November 1-4, 2016}, pages 2383--2392. The Association for Computational Linguistics.

\bibitem[{Saiteja et~al.(2023)Saiteja, Sara, and Pin-Yu}]{utpala-etal-2023-locally}
Utpala Saiteja, Hooker Sara, and Chen Pin-Yu. 2023.
\newblock \href {https://doi.org/10.18653/v1/2023.findings-emnlp.566} {Locally differentially private document generation using zero shot prompting}.
\newblock In \emph{Findings of the Association for Computational Linguistics: EMNLP 2023}, pages 8442--8457, Singapore. Association for Computational Linguistics.

\bibitem[{Shokri et~al.(2017)Shokri, Stronati, Song, and Shmatikov}]{shokri2017membership}
Reza Shokri, Marco Stronati, Congzheng Song, and Vitaly Shmatikov. 2017.
\newblock Membership inference attacks against machine learning models.
\newblock In \emph{2017 IEEE symposium on security and privacy (SP)}, pages 3--18. IEEE.

\bibitem[{Socher et~al.(2013{\natexlab{a}})Socher, Perelygin, Wu, Chuang, Manning, Ng, and Potts}]{socher-etal-2013-recursive}
Richard Socher, Alex Perelygin, Jean Wu, Jason Chuang, Christopher~D. Manning, Andrew Ng, and Christopher Potts. 2013{\natexlab{a}}.
\newblock \href {https://www.aclweb.org/anthology/D13-1170} {Recursive deep models for semantic compositionality over a sentiment treebank}.
\newblock In \emph{Proceedings of the 2013 Conference on Empirical Methods in Natural Language Processing}, pages 1631--1642, Seattle, Washington, USA. Association for Computational Linguistics.

\bibitem[{Socher et~al.(2013{\natexlab{b}})Socher, Perelygin, Wu, Chuang, Manning, Ng, and Potts}]{socher2013recursive}
Richard Socher, Alex Perelygin, Jean Wu, Jason Chuang, Christopher~D Manning, Andrew~Y Ng, and Christopher Potts. 2013{\natexlab{b}}.
\newblock Recursive deep models for semantic compositionality over a sentiment treebank.
\newblock In \emph{Proceedings of the 2013 conference on empirical methods in natural language processing}, pages 1631--1642.

\bibitem[{Song and Raghunathan(2020)}]{song2020information}
Congzheng Song and Ananth Raghunathan. 2020.
\newblock Information leakage in embedding models.
\newblock In \emph{Proceedings of the 2020 ACM SIGSAC conference on computer and communications security}, pages 377--390.

\bibitem[{Staab et~al.(2024{\natexlab{a}})Staab, Vero, Balunovi{\'c}, and Vechev}]{staab2024large}
Robin Staab, Mark Vero, Mislav Balunovi{\'c}, and Martin Vechev. 2024{\natexlab{a}}.
\newblock Large language models are advanced anonymizers.
\newblock \emph{arXiv preprint arXiv:2402.13846}.

\bibitem[{Staab et~al.(2024{\natexlab{b}})Staab, Vero, Balunovic, and Vechev}]{StaabVBV24}
Robin Staab, Mark Vero, Mislav Balunovic, and Martin~T. Vechev. 2024{\natexlab{b}}.
\newblock \href {https://openreview.net/forum?id=kmn0BhQk7p} {Beyond memorization: Violating privacy via inference with large language models}.
\newblock In \emph{The Twelfth International Conference on Learning Representations, {ICLR} 2024, Vienna, Austria, May 7-11, 2024}. OpenReview.net.

\bibitem[{Tian et~al.(2020)Tian, Gao, Xiao, Liu, He, Wu, Wang, and Wu}]{DBLP:conf/acl/TianGXLHWWW20}
Hao Tian, Can Gao, Xinyan Xiao, Hao Liu, Bolei He, Hua Wu, Haifeng Wang, and Feng Wu. 2020.
\newblock \href {https://doi.org/10.18653/V1/2020.ACL-MAIN.374} {{SKEP:} sentiment knowledge enhanced pre-training for sentiment analysis}.
\newblock In \emph{Proceedings of the 58th Annual Meeting of the Association for Computational Linguistics, {ACL} 2020, Online, July 5-10, 2020}, pages 4067--4076. Association for Computational Linguistics.

\bibitem[{Wang et~al.(2019{\natexlab{a}})Wang, Singh, Michael, Hill, Levy, and Bowman}]{wang2018glue}
Alex Wang, Amanpreet Singh, Julian Michael, Felix Hill, Omer Levy, and Samuel~R. Bowman. 2019{\natexlab{a}}.
\newblock \href {https://openreview.net/forum?id=rJ4km2R5t7} {{GLUE}: A multi-task benchmark and analysis platform for natural language understanding}.
\newblock In \emph{International Conference on Learning Representations}.

\bibitem[{Wang et~al.(2019{\natexlab{b}})Wang, Singh, Michael, Hill, Levy, and Bowman}]{DBLP:conf/iclr/WangSMHLB19}
Alex Wang, Amanpreet Singh, Julian Michael, Felix Hill, Omer Levy, and Samuel~R. Bowman. 2019{\natexlab{b}}.
\newblock \href {https://openreview.net/forum?id=rJ4km2R5t7} {{GLUE:} {A} multi-task benchmark and analysis platform for natural language understanding}.
\newblock In \emph{7th International Conference on Learning Representations, {ICLR} 2019, New Orleans, LA, USA, May 6-9, 2019}. OpenReview.net.

\bibitem[{Wang et~al.(2019{\natexlab{c}})Wang, Xiao, Yang, Zhao, Hui, Shin, Shin, and Yu}]{wang2019collecting}
Ning Wang, Xiaokui Xiao, Yin Yang, Jun Zhao, Siu~Cheung Hui, Hyejin Shin, Junbum Shin, and Ge~Yu. 2019{\natexlab{c}}.
\newblock Collecting and analyzing multidimensional data with local differential privacy.
\newblock In \emph{2019 IEEE 35th International Conference on Data Engineering (ICDE)}, pages 638--649. IEEE.

\bibitem[{Wired(2023)}]{report2}
Wired. 2023.
\newblock Openai’s custom chatbots are leaking their secrets.
\newblock https://www.wired.com/story/openai-custom-chatbots-gpts-prompt-injection-attacks/.

\bibitem[{Ye et~al.(2022)Ye, Maddi, Murakonda, Bindschaedler, and Shokri}]{ye2022enhanced}
Jiayuan Ye, Aadyaa Maddi, Sasi~Kumar Murakonda, Vincent Bindschaedler, and Reza Shokri. 2022.
\newblock Enhanced membership inference attacks against machine learning models.
\newblock In \emph{Proceedings of the 2022 ACM SIGSAC Conference on Computer and Communications Security}, pages 3093--3106.

\bibitem[{Yue et~al.(2021)Yue, Du, Wang, Li, Sun, and Chow}]{yue-etal-2021-differential}
Xiang Yue, Minxin Du, Tianhao Wang, Yaliang Li, Huan Sun, and Sherman S.~M. Chow. 2021.
\newblock \href {https://doi.org/10.18653/v1/2021.findings-acl.337} {Differential privacy for text analytics via natural text sanitization}.
\newblock In \emph{Findings of the Association for Computational Linguistics: ACL-IJCNLP 2021}, pages 3853--3866, Online. Association for Computational Linguistics.

\bibitem[{Zhang et~al.(2015{\natexlab{a}})Zhang, Zhao, and LeCun}]{2zhang2015character}
Xiang Zhang, Junbo Zhao, and Yann LeCun. 2015{\natexlab{a}}.
\newblock Character-level convolutional networks for text classification.
\newblock \emph{Advances in neural information processing systems}, 28.

\bibitem[{Zhang et~al.(2015{\natexlab{b}})Zhang, Zhao, and LeCun}]{zhang2015character}
Xiang Zhang, Junbo Zhao, and Yann LeCun. 2015{\natexlab{b}}.
\newblock Character-level convolutional networks for text classification.
\newblock \emph{Advances in neural information processing systems}, 28.

\bibitem[{Zhang and Sabuncu(2018)}]{zhang2018generalized}
Zhilu Zhang and Mert Sabuncu. 2018.
\newblock Generalized cross entropy loss for training deep neural networks with noisy labels.
\newblock \emph{Advances in neural information processing systems}, 31.

\bibitem[{Zhou et~al.(2023)Zhou, Lu, Ma, Gui, Wang, Ding, Zhang, Zhang, and Huang}]{zhou2023textobfuscator}
Xin Zhou, Yi~Lu, Ruotian Ma, Tao Gui, Yuran Wang, Yong Ding, Yibo Zhang, Qi~Zhang, and Xuan-Jing Huang. 2023.
\newblock Textobfuscator: Making pre-trained language model a privacy protector via obfuscating word representations.
\newblock In \emph{Findings of the Association for Computational Linguistics: ACL 2023}, pages 5459--5473.

\end{thebibliography}

\appendix

\label{sec:appendix}
\section{Proof of Theorems}\label{prf}
\begin{theorema}[\textbf{Context-free Optimal Reconstruction Attack}]\label{theorem:1a}
Given a sanitized token $y\in Y$, the optimal strategy to reconstruct the original token of $y$ is to select the token ${x}' \in X$ according to the following rule:
\begin{align}
{x}' =& \underset{{x}' \in X}{\mathrm{arg\,max\,}} \Pr\left({x}'|y\right) \\
=& \underset{{x}' \in X}{\mathrm{arg\,max\,}} \frac{\Pr\left(y|{x}'\right)  \Pr\left({x}'\right)}{\Pr\left(y\right)}.
\end{align}
\end{theorema}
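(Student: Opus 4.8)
The plan is to recognize the stated rule as the maximum a posteriori (MAP) estimator and to show that MAP maximizes the attack success rate (equivalently, minimizes the expected $0$--$1$ reconstruction loss) over all admissible context-free strategies — this is the classical Bayes-risk characterization underlying the work of Palamidessi et al. First I would formalize a context-free attack as a (possibly randomized) map $g$ that on input a sanitized token $y \in Y$ outputs a guess $g(y) \in X$, with success rate $\mathrm{ASR}(g) = \Pr[\,g(Y)=X\,]$, where $(X,Y)$ is distributed according to the joint law induced by $\Pr(X)$ and the DP sampling kernel $\Pr(Y|X)$.

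Next I would condition on the observed token. By the law of total probability,
\begin{equation}
\mathrm{ASR}(g) = \sum_{y \in Y} \Pr(y)\,\Pr\!\left[\,g(y)=X \mid Y=y\,\right].
\end{equation}
The crucial observation is that the guesses $g(y)$ for distinct $y$ occur in separate summands and are mutually unconstrained, so the sum is maximized by maximizing each term independently. For fixed $y$, a deterministic guess $x'$ makes the term equal to $\Pr(x'\mid y)$, which is maximized by any $x' \in \arg\max_{x'\in X}\Pr(x'\mid y)$; a randomized guess only produces a convex combination of values $\{\Pr(x'\mid y)\}_{x'\in X}$ and hence cannot exceed this maximum. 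This shows the MAP rule of \Cref{eq1} is optimal.

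Finally I would apply Bayes' formula, $\Pr(x'\mid y) = \Pr(y\mid x')\,\Pr(x')/\Pr(y)$, to rewrite the objective in its second displayed form, remarking that $\Pr(y)$ is a strictly positive constant in $x'$ and therefore irrelevant to the $\arg\max$. The main obstacle is not computation but stating the optimality argument carefully: one must pin down the strategy class being optimized over (functions of $y$ alone, since the attack is context-free, possibly randomized), justify that there is no coupling constraint forcing a joint choice across different sanitized tokens so that pointwise optimization is valid, and confirm that randomization offers no advantage. Everything else is a one-line invocation of Bayes' rule.
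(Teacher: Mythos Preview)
Your proposal is correct and follows essentially the same approach as the paper: both arguments fix the observed sanitized token, represent a randomized guess as a probability vector over $X$, and observe that the resulting success probability is a convex combination of the posteriors $\Pr(x'\mid y)$, hence bounded by their maximum. Your write-up is somewhat more careful than the paper's in that you explicitly formalize the strategy class, decompose $\mathrm{ASR}(g)$ over all $y\in Y$, and justify pointwise optimization, whereas the paper works with a single fixed $y$ from the outset; but the core Bayes-risk argument is identical.
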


\begin{proof}[Proof of Theorem 1] 
We aim to prove that the attack in \Cref{theorem:1a} maximizes the expected value of the reconstructed token matching the original token for $y$.
Let \(X\) denote the set of probable original tokens for $y$, defined as \(X=\{ x_{1}, x_{2}, \ldots,x_i,\ldots,x_N\}\), where $x_i \in X$ is a probable original token of $y$ with probability $\Pr(x_i|y)$. Let the probability that an attacker infers $x_{i}  \in X$ as the reconstructed token of $y$ be $p_{i}$. It holds that$\sum_{i=1}^N p_{i}=1$. The expected value of the reconstructed token matching the original token of $y$ can be expressed as:
\begin{align}
    &\Pr\left(x_{1}| y\right)p_{1}+\Pr\left(x_{2}| y\right) p_{2}+\ldots+\notag\\
    &\Pr\left(x_{N}| y\right) p_{N}.
\end{align}
With $\sum_{j=1}^N p_{ij}=1$, it can be deduced that:
\begin{align}\label{pr:th1a}
      &\Pr\left(x_{1}| y\right)p_{1}+\Pr\left(x_{2}| y\right) p_{2}+\ldots+\notag\\
    &\Pr\left(x_{N}| y\right) p_{N}
     \leq \underset{{x}' \in X}{\mathrm{\,max\,}} \Pr\left({x}'|y\right).
\end{align}
According to~\Cref{pr:th1a}, the expected value of correctly reconstructing the original token is smaller than $\underset{{x}' \in X}{\mathrm{max,}} \Pr\left({x}'|y\right)$, which is the expected value under attack strategy in~\Cref{theorem:1a}. 
\end{proof}

\begin{theorema}[\textbf{Contextual Optimal Reconstruction Attack}]\label{theorem:2a}
Given a sanitized token $y$ and its sanitized context $\mathbf{c}$, the optimal strategy to reconstruct the original token of $y$ is to select the token ${x}'\in X$ according to the following rule:
\begin{align}
{x}' &= \underset{{x}' \in X}{\mathrm{arg\,max\,}}  \Pr\left({x}'|y,\mathbf{c}\right)\\&= \underset{{x}' \in X}{\mathrm{arg\,max\,}}  \frac{\Pr\left(y|{x}'\right)  \Pr\left({x}'\right)}{\Pr\left(y\right)}  \frac{\Pr\left(\mathbf{c}|{x}', y\right)}{\Pr\left(\mathbf{c}|y\right)}.
\end{align}
\end{theorema}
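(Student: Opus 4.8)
The plan is to follow exactly the template of the proof of Theorem 1, now conditioning on the pair $(y,\mathbf{c})$ instead of on $y$ alone. First I would fix the sanitized token $y$ together with its sanitized context $\mathbf{c}$, and let $X=\{x_1,x_2,\ldots,x_N\}$ denote the set of probable original tokens, where $x_i$ is the original token with (posterior) probability $\Pr(x_i\mid y,\mathbf{c})$. A reconstruction strategy is a distribution $(p_1,\ldots,p_N)$ with $\sum_i p_i = 1$, where $p_i$ is the probability the attacker outputs $x_i$ as its guess. The expected probability of a correct reconstruction is then $\sum_{i=1}^N \Pr(x_i\mid y,\mathbf{c})\, p_i$, and since this is a convex combination of the values $\Pr(x_i\mid y,\mathbf{c})$ it is bounded above by $\max_{x'\in X}\Pr(x'\mid y,\mathbf{c})$, with equality attained by putting all mass on an $x'$ achieving the maximum. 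This establishes that $x' = \arg\max_{x'\in X}\Pr(x'\mid y,\mathbf{c})$ is optimal, giving the first line of the theorem.

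The second step is the algebraic rewriting of $\Pr(x'\mid y,\mathbf{c})$ into the stated factored form. I would apply Bayes' rule in the form $\Pr(x'\mid y,\mathbf{c}) = \dfrac{\Pr(x',y,\mathbf{c})}{\Pr(y,\mathbf{c})}$ and then expand the joint numerator by the chain rule as $\Pr(x',y,\mathbf{c}) = \Pr(x')\,\Pr(y\mid x')\,\Pr(\mathbf{c}\mid x',y)$, using the sanitization model's independence structure (the token $y$ is sampled from $x'$ via the DP mechanism, and the context then has conditional law $\Pr(\mathbf{c}\mid x',y)$ as posited in the threat model). Similarly the denominator factors as $\Pr(y,\mathbf{c}) = \Pr(y)\,\Pr(\mathbf{c}\mid y)$. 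Dividing gives
\begin{equation*}
\Pr(x'\mid y,\mathbf{c}) = \frac{\Pr(y\mid x')\,\Pr(x')}{\Pr(y)}\cdot\frac{\Pr(\mathbf{c}\mid x',y)}{\Pr(\mathbf{c}\mid y)},
\end{equation*}
which is precisely the expression inside the $\arg\max$ in the theorem statement; taking $\arg\max_{x'\in X}$ of both sides completes the proof.

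The conceptually easy part is the optimality argument — it is the same one-line convexity/averaging bound as in Theorem 1, just with a richer conditioning event, and the entropy remark $H(X\mid Y,C)\le H(X\mid Y)$ quoted after the theorem is exactly the information-theoretic shadow of this. The part that needs the most care is being explicit about which conditional-independence assumptions of the sanitization process license the factorization $\Pr(x',y,\mathbf{c}) = \Pr(x')\Pr(y\mid x')\Pr(\mathbf{c}\mid x',y)$ and $\Pr(y,\mathbf{c})=\Pr(y)\Pr(\mathbf{c}\mid y)$; since the quantity $\Pr(\mathbf{c}\mid x',y)$ is itself a primitive the adversary is assumed to know (per the threat model), the main obstacle is really just bookkeeping rather than a substantive difficulty — I would state the chain-rule expansion directly and note that $\Pr(\mathbf{c}\mid y)$ is constant over $x'\in X$ so it does not affect the $\arg\max$, mirroring the discussion already given in the main text after Theorem 2.
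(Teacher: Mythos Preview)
Your proposal is correct and mirrors the paper's proof: both fix $(y,\mathbf{c})$, model an attack as a distribution $(p_1,\ldots,p_N)$ over $X$, and bound the expected success $\sum_i \Pr(x_i\mid y,\mathbf{c})\,p_i$ by $\max_{x'}\Pr(x'\mid y,\mathbf{c})$ via the convex-combination argument. In fact you go slightly further than the paper, which stops at the optimality of the $\arg\max$ and does not spell out the Bayes/chain-rule factorization that yields the second displayed equality; your explicit derivation of $\Pr(x'\mid y,\mathbf{c}) = \dfrac{\Pr(y\mid x')\Pr(x')}{\Pr(y)}\cdot\dfrac{\Pr(\mathbf{c}\mid x',y)}{\Pr(\mathbf{c}\mid y)}$ is a welcome completion rather than a departure.
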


\begin{proof}[Proof of Theorem 2] 
We aim to prove that the attack in \Cref{theorem:2a} maximizes the expected value of the reconstructed token matching the original token for $y$ and $\mathbf{c}$.
Let \(X\) denote the set of probable original tokens for $y$ and $\mathbf{c}$, defined as \(X=\{ x_{1}, x_{2}, \ldots,x_i,\ldots,x_N\}\), where $x_i\in X$ is a probable original token of $y$ and $\mathbf{c}$ with probability $\Pr(x_i|y,\mathbf{c})$. Let the probability that an attacker inferring $x_{i} \in X$ as the reconstructed token of $y$ and $\mathbf{c}$ be $p_{i}$ and it holds that$\sum_{i=1}^N p_{i}=1$. The expected value of the reconstructed token matching the original token of the sanitized token $y$ and  sanitized context $\mathbf{c}$ can be expressed as:
\begin{align}
    &\Pr(x_{1}| y,\mathbf{c}) p_{1}\!+\!\Pr(x_{2}| y,\mathbf{c}) p_{2}\!+\!\ldots\!+\notag\\&\Pr(x_{N}| y,\mathbf{c}) p_{N}.
\end{align}
Utilizing $\sum_{i=1}^N p_{i}=1$, it can be deduced that:
\begin{align}\label{pr:th2a}
     \sum_{i=1}^{N} \Pr(x_i \mid y, \mathbf{c})p_{i}\leq \underset{{x}' \in X}{\mathrm{\,max\,}} \Pr\left({x}'|y,\mathbf{c}\right).
\end{align}
According to~\Cref{pr:th2a}, the expected value of correctly inferring the original token is smaller than $\underset{{x}' \in X}{\mathrm{max,}} \Pr\left({x}'|y,\mathbf{c}\right)$, which is the expected value under attack strategy in~\Cref{theorem:2}. 
\end{proof}

\section{Discussion of \textbf{Contextual \textit{K}-ASR Bound}}\label{tmk}
\begin{table*}[ht]
\centering
\caption{Time cost of reconstructing $1,000$ tokens with various $K$ in the \textit{Contextual \textit{K}-ASR Bound}.}
\label{tbl:time2}
\resizebox{0.54\textwidth}{!}{
\begin{tabular}{c|ccccc}
\toprule
\textbf{\textit{K}} & $K=1$ & $K=5$ & $K=10$ & $K=15$ & $K=20$ \\ 
\midrule
\textbf{Minute} & $11.631$ & $16.659$ & $18.850$ & $24.566$ & $29.216$ \\
\bottomrule
\end{tabular}
}
\end{table*}

We assess the time required to calculate the Contextual \textit{K}-ASR Bound for $1,000$ sanitized tokens. we conduct experiments using \textit{CUSTEXT+}~\cite{chen2023customized} defense on SST-2~\cite{socher-etal-2013-recursive} dataset with various values of $K$. We set the size of $X$ in \textit{CUSTEXT+} to its default $20$. The experimental results are shown in \Cref{tbl:time2}. It is observed that as $K$ increases, the time cost also increases.

On the other hand, the size of the input token set $X$ in  \textit{SANTEXT+}~\cite{yue-etal-2021-differential} far exceeds 20. Specifically, it uses an input token set $X$ consisting of more than $15,000$ tokens in SST-2.  Given this, traversing the entire $X$ for the Contextual \textit{K}-ASR Bound in \textit{SANTEXT+} is not practical to implement.

\section{Discussion of {Context-free Bayesian Attack}}\label{dsa}

The attack strategy in \Cref{theorem:1} reconstructs a token considering two factors: a larger $\Pr(y|{x}')$ and a larger $\Pr({x}')$. However, when the probabilities $\Pr({x}')$ of all probable original tokens corresponding to $y$ are zero in the shadow dataset, the attack strategy in \Cref{theorem:1} loses its effectiveness of reconstructing a token with larger $\Pr(y|{x}')$. In such instances,  \textit{Context-free Bayesian Attack}, incorporating $\alpha^{-1}$, retains its effectiveness. This method selects a reconstructed token ${x}'$ according to:
\begin{align}
{x}' &=\underset{{x}' \in X}{\mathrm{arg\,max}\,} \frac{\Pr(y|{x}') \cdot(0+\alpha^{-1})}{\Pr(y)} \\&=\underset{{x}' \in X}{\mathrm{arg\,max}\,} \frac{\Pr(y|{x}') \cdot\alpha^{-1}}{\Pr(y)}.
\end{align}
Furthermore, $\alpha^{-1}$ has minimal impact on the outcome of the reconstruction attack when these probabilities $\Pr({x}')$ are not zero.

To evaluate the practical impact of the parameter $\alpha^{-1}$, we conducted an ablation study across three datasets: SST-2~\cite{socher2013recursive}, QNLI~\cite{DBLP:conf/iclr/WangSMHLB19}, and AGNEWS~\cite{zhang2015character}. We assessed the Attack Success Rate (ASR) of the \textit{Context-free Bayesian Attack} both with and without $\alpha^{-1}$ against \textit{SANTEXT+}~\cite{yue-etal-2021-differential}, creating $3,000$ test samples randomly selected from each dataset.

Figure~\ref{DS} presents the results of the ablation studies on the parameter $\alpha^{-1}$, where the experimental results demonstrate the effectiveness of $\alpha^{-1}$.

\begin{table*}[ht]
\centering
\caption{ASR of \textit{Context-free Bayesian Attack} using various constants instead of $\alpha^{-1}$ against \textit{SANTEXT+} defense.}
\label{tbl:diff_size}
\resizebox{\textwidth}{!}{
\begin{tabular}{c||c c c c|c c c c|c c c c}
\toprule
\multirow{2}{*}{\textbf{Constant}} & \multicolumn{4}{c|}{\textbf{SST-2}} & \multicolumn{4}{c|}{\textbf{QNLI}} & \multicolumn{4}{c}{\textbf{AGNEWS}} \\
\cline{2-13}
   \rule{0pt}{2.3ex}
& $\epsilon=2.0$ & $\epsilon=4.0$ & $\epsilon=2.8$ & $\epsilon=3.2$ & $\epsilon=2.0$ & $\epsilon=4.0$ & $\epsilon=2.8$ & $\epsilon=3.2$ & $\epsilon=2.0$ & $\epsilon=4.0$ & $\epsilon=2.8$ & $\epsilon=3.2$ \\ 
\midrule
\textbf{\large{$\alpha^{-1}$}} &$0.180$ &$0.410$&$0.650$  &$0.843$ &$0.155$ &$0.434$ &$0.711$ & $0.865$ &$0.223$ & $0.450$ &$0.712$ &$0.894$\\
{\large{w/o $\alpha^{-1}$}} &$0.130$ &$0.331$ &$0.600$ &$0.682$  &$0.133$ &$0.387$ &$0.617$ & $0.780$ &$0.204$ &$0.416$ &$0.653$ &$0.820$ \\
\textbf{\large{$0.1\times\alpha^{-1}$}} &$0.173$ &$0.404$  &$0.648$ &$0.840$ &$0.152$&$0.429$ &$0.704$ &$0.862$ & $0.218$ &$0.448$ &$0.709$ & $0.892$\\
\textbf{\large{$5\times\alpha^{-1}$}} &$0.180$ &$0.410$ &$0.650$ &$0.843$ &$0.155$ &$0.434$ & $0.711$ & $0.865$ & $0.222$ &$0.449$ &$0.711$ & $0.894$\\
\textbf{\large{$10\times\alpha^{-1}$}} &$0.180$ &$0.410$&$0.650$&$0.843$ &$0.155$ &$0.434$ & $0.711$ & $0.865$  &$0.221$& $0.449$ & $0.711$&$0.894$\\
\bottomrule
\end{tabular}
}
\label{DS_a}
\end{table*}
To further explore the influence of different constants instead of $\alpha^{-1}$ on \textit{Context-free Bayesian Attack}, we compare the ASR of $\alpha^{-1}$ with those of other constants against \textit{SANTEXT+}. 

\Cref{DS_a} illustrates the experimental results across three datasets under various privacy parameters \(\epsilon\). These results highlight the effectiveness of the constant introduced in the \textit{Context-free Bayesian Attack}. When this constant is smaller than \(\alpha^{-1}\), the ASR of the \textit{Context-free Bayesian Attack} decreases slightly in the experiments. Conversely, when this constant exceeds \(\alpha^{-1}\), the ASR does not show significant growth.

\section{Discussion of {Contextual Bayesian Attack}}\label{clc}

To evaluate the effectiveness of the detector $h$, we calculate the classification accuracy of both the detector $h$ and the original BERT model \cite{devlin2018bert} across three datasets: SST-2~\cite{socher2013recursive}, QNLI~\cite{DBLP:conf/iclr/WangSMHLB19}, and AGNEWS~\cite{zhang2015character}. We construct $3,000$ test samples that are disjointed from the training data of detector $h$ and randomly selected. For the construction of training samples,  we generate 100 sanitized sentences for each original sentence in the shadow dataset. The experiment is conducted $10$ times to calculate standard deviations (1-sigma) using the closed form formula \cite{altman2005standard} with numpy\footnote{\url{https://numpy.org/}}.

\Cref{fig:2} shows the experimental results. It is observed that the classification accuracy of $h$ is higher than that of the original BERT model. Additionally, the classification accuracy of $h$ increases with the privacy parameter $\epsilon$. These results demonstrate the effectiveness of the detector $h$ in inferring whether $\mathbf{c}$ is the sanitized context of ${x}'$ and $y$.

 \begin{figure*}[ht]
\hspace{0.1cm}
\includegraphics[width=0.95\textwidth]{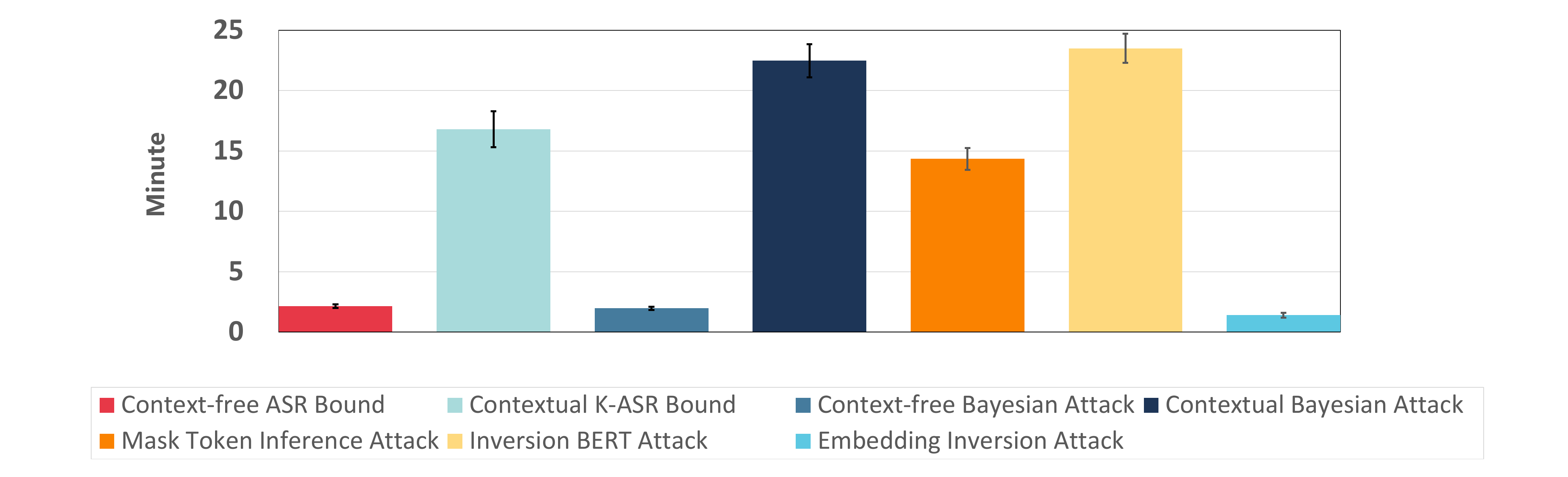}
\caption{Comparison of the time costs in various reconstruction attacks.}
\label{fig:c2}
\end{figure*}
\section{Experimental dataset}\label{dats}
We conduct the experiments of reconstruction attacks on the following four datasets:
\begin{itemize}
\item \textbf{SST-2~\cite{socher2013recursive} :} The (binary) Stanford Sentiment Treebank (SST-2) dataset consists of movie reviews with sentiment labels. This dataset is used to categorize the sentiment of sentences~\cite{DBLP:conf/acl/TianGXLHWWW20} into the positive or negative label. We use the version in the GLUE~\cite{DBLP:conf/iclr/WangSMHLB19}.

\item \textbf{QNLI~\cite{DBLP:conf/iclr/WangSMHLB19} :} The Question-answering Natural Language Inference (QNLI) dataset is derived from the Stanford Question Answering Dataset (SQuAD)~\cite{DBLP:conf/emnlp/RajpurkarZLL16}. This dataset consists of question-paragraph pairs, where the task is to determine whether the paragraph contains the answer to the question. We also utilize the use available in the GLUE benchmark~\cite{DBLP:conf/iclr/WangSMHLB19}.

\item \textbf{AGNEWS~\cite{zhang2015character} :} The AG News dataset is a widely used benchmark for text classification tasks. It consists of news articles and their corresponding labels. This dataset is utilized to train a model that classifies news into four categories: World, Sports, Business, and Sci/\,Tech. We use the version of this dataset available on this site\footnote{\href{http://groups.di.unipi.it/~gulli/AG_corpus_of_news_articles.html}{AG Corpus of News Articles}}. 

\item \textbf{Yelp~\cite{zhang2015character} :} The Yelp dataset consists of reviews from the Yelp website, where the task is to predict the sentiment expressed in the reviews. The dataset is used for sentiment analysis tasks. We utilized the version available from the Yelp Challenge\footnote{\href{https://www.yelp.com/dataset}{Yelp Open Dataset}}. 
\end{itemize}

\section{Implementation details}\label{imp}
\subsection{Context-free ASR Bound and Contextual \textit{K}-ASR Bound}
 We run experiments on a cluster with NVIDIA RTX A6000 GPUs and  Intel Xeon Gold 6130 2.10 GHz CPUs. We conduct the implementation of all mechanisms using Python. We calculate $\Pr({x}')$ by estimating the statistical probability of the original sentences corresponding to the sanitized tokens.

 In the \textit{Contextual K-ASR Bound}, we generate 30 sanitized sentences for each original sentence to construct training samples. For each token ${y} \in \Gamma$ in the sanitized sentences, we create a training sample. The input for each sample consists of two token sequences: The first sequence is the sanitized sentence consisting of ${y}$ and $\mathbf{c}$. The second sequence replaces the token ${y} \in \Gamma$ in the first sequence with its reconstructed result ${x}'$ from \textit{Context-free Bayesian Attack}. If ${x}'$ matches the original token of ${y}$, which means that $\mathbf{c}$ is the sanitized context of the sanitized token $y$ and the reconstructed token ${x_i}'$, we label the sample $1$. Otherwise, we label it $0$. To maintain label balance, for each sample labeled as $1$, we generate a sample labeled as $0$ using the same sanitized sentence with the second most likely token from \textit{Context-free Bayesian Attack}. We also generate a sample labeled as $1$ with the original token for each sample labeled as $0$. For model fine-tuning, we use the pooled output from the BERT model~\cite{devlin2018bert} and transform it into class scores through a linear layer. 
 
We use the \textit{Adam\,Optimizer} \cite{kingma2014adam} with a learning rate of 0.00005. We reduce the learning rate by 0.1 every 1000 steps. We measure performance with \textit{CrossEntropyLoss}~\cite{zhang2018generalized} and fine-tune the model over 3 epochs on constructed training samples. The fine-tuned model represents probability $\Pr(\mathbf{c} | x', y)$ that $\mathbf{c}$ is the sanitized context for $x'$ and $y$, based on its output of the label $1$.

\subsection{Context-free Bayesian Attack and Contextual Bayesian Attack}
We run experiments on a cluster with NVIDIA RTX A6000 GPUs and  Intel Xeon Gold 6130 2.10 GHz CPUs. We conduct the implementation of all mechanisms using Python. We calculate the statistical probability in the shadow dataset to represent the approximated probability $\Pr({x}')$. 

In the \textit{Contextual Bayesian Attack}, we generate 100 sanitized sentences for each original sentence to construct training samples. For each token ${y} \in \Gamma$ in the sanitized sentences, we create a training sample. The input for each sample consists of two token sequences: The first sequence is the sanitized sentence consisting of ${y}$ and $\mathbf{c}$. The second sequence replaces the token ${y} \in \Gamma$ in the first sequence with its reconstructed result ${x}'$ from \textit{Context-free Bayesian Attack}. If ${x}'$ matches the original token of ${y}$, which means that $\mathbf{c}$ is the sanitized context of the sanitized token $y$ and the reconstructed token ${x}'$, we label the sample $1$. Otherwise, we label it $0$. To maintain label balance, for each sample labeled as $1$, we generate a sample labeled as $0$ using the same sanitized sentence with the second most likely token from \textit{Context-free Bayesian Attack}. We also generate a sample labeled as $1$ with the original token for each sample labeled as $0$. For model fine-tuning, we use the pooled output from the BERT model~\cite{devlin2018bert} and transform it into class scores through a linear layer. 

We use the \textit{Adam\,Optimizer} \cite{kingma2014adam} with a learning rate of 0.00005. We reduce the learning rate by 0.1 every 1000 steps. We measure performance with \textit{CrossEntropyLoss}~\cite{zhang2018generalized} and fine-tune the model over 3 epochs on constructed training samples. The fine-tuned model represents probability $\Pr(\mathbf{c} | x', y)$ that $\mathbf{c}$ is the sanitized context for $x'$ and $y$, based on its output of label $1$.

\section{Comparison of time cost}\label{ct22}

To further compare the computation costs, we estimated the time required for various reconstruction attacks to reconstruct $1,000$ sanitized tokens on the SST-2 dataset using the \textit{CUSTEXT+} as a defense. Similarly, we conducted this experiment $10$ times to calculate the standard deviations (1-sigma) using the closed form formula~\cite{altman2005standard} with the numpy\footnote{\url{https://numpy.org/}}.

As illustrated in \Cref{fig:c2}, the time cost for both the \textit{Embedding Inversion Attack} and the \textit{Context-free Bayesian Attack} is approximately comparable, both being less than 5 minutes. These two reconstruction attacks do not require model fine-tuning or inference. In contrast, the \textit{Contextual Bayesian Attack}, \textit{Mask Token Inference Attack}, and \textit{Inversion BERT Attack} involve either model fine-tuning or model inference. Among these three, the \textit{Mask Token Inference Attack} has the smallest time cost, roughly 15 minutes; the \textit{Inversion BERT Attack} takes the longest, about 25 minutes. The time cost of the \textit{Contextual Bayesian Attack} in the experiments is greater than that of the \textit{Mask Token Inference Attack}. {This is because \textit{Mask Token Inference Attack} only involves an inference phase without model training. However, our contextual attack requires training a model and inferring on this model. Therefore, \textit{Contextual Bayesian Attack} is slower than the \textit{Mask Token Inference Attack}.} Moreover, it is observed that the standard deviations of time costs for the \textit{Embedding Inversion Attack} and the \textit{Context-free Bayesian Attack} are smaller than those for the other reconstruction attacks.

\section{Reconstruction attacks with the large epsilon}\label{ct24}
{As illustrated in \Cref{tab:attack-result}, the performance of our proposed attacks is comparable to that of the \textit{Embedding Inversion Attack} for large values of $\epsilon$. The rationale behind this phenomenon is as follows:}

{Let $x$ represent a raw token and $y$ its corresponding sanitized token.  The performance of our attacks partially depends on $\Pr(y|x)$. Specifically, as $\epsilon$  increases, $\Pr(y|x)$ also increases in both \textit{SANTEXT+} and \textit{CUSTEXT+}, causing our attacks to tend to select the token $x$  with the highest $\Pr(y|x)$  as the reconstructed token. For a given sanitized token $y$, the \textit{Embedding Inversion Attack} selects the token $x'$  with the smallest embedding distance to $y$  as the reconstructed token. In the context of \textit{SANTEXT+} and \textit{CUSTEXT+}, this is equivalent to selecting the token $x'$  with the highest $\Pr(y|x')$.}

{From the above analysis, it can be concluded that when $\epsilon$ is large, our proposed attack and the \textit{Embedding Inversion Attack} tend to select the same token as the reconstructed token in most cases.}

\section{Reconstruction attacks with the misaligned 
shadow dataset}\label{ct23}

\begin{figure}[ht]
\hspace{0cm}
\includegraphics[width=0.95\columnwidth]{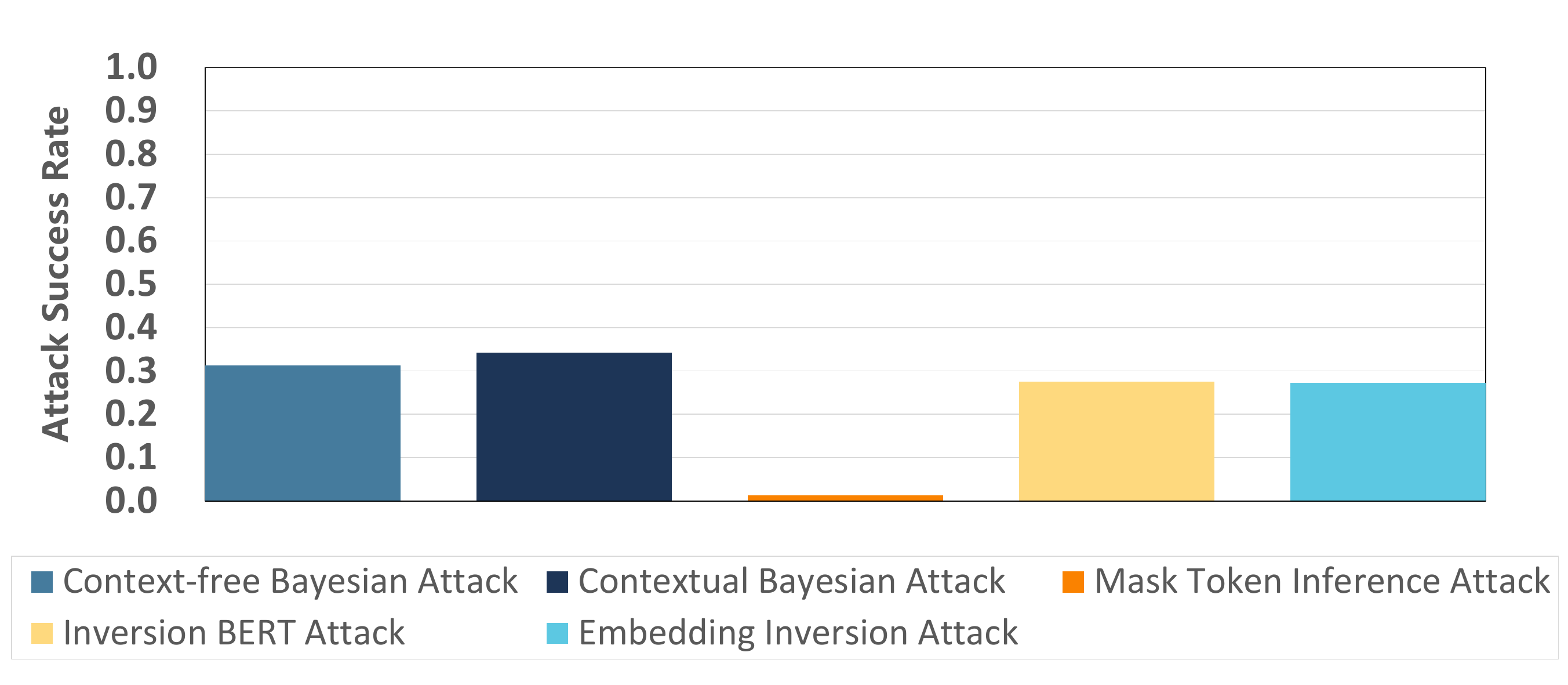}
\caption{ASR of reconstruction attacks with the misaligned 
shadow dataset.}
\label{fig:c3}
\end{figure}

{In this paper, we present a privacy evaluation study on DP-based text sanitization. Our threat model assumes that practical attacks access a shadow dataset from the same distribution as the target to reconstruct. This assumption simulates the worst-case scenario for privacy preservation and is consistent with the settings described in previous work~\cite{balle2022reconstructing, shokri2017membership}.}

{To reflect real-world scenarios where shadow datasets may not be perfectly aligned with sanitized data, we conduct additional attacks on \textit{CUSTEXT+} with $\epsilon = 3.0$ (as recommended by its authors) using a misaligned shadow dataset. Specifically, we use SST-2 as the shadow dataset and QNLI as the target dataset to reconstruct. \Cref{fig:c3} presents the experimental results with the misaligned shadow dataset. It is observed that our reconstruction attacks remain comparable to the baseline methods.}
\end{document}